\newtheorem{thm}{Theorem}[section]
\newtheorem{lem}{Lemma}[section]
\newtheorem{prop}{Proposition}[section]
\newtheorem{dfn}{Definition}[section]
\newtheorem{ax}{Axiom}[section]
\newtheorem{rem}{Remark}[section]
\DeclareMathOperator{\rot}{rot}
\DeclareMathOperator{\Diver}{Div}
\DeclareMathOperator{\Grad}{Grad}
\DeclareMathOperator{\cof}{Cof}
\DeclareMathOperator{\tr}{tr}
\DeclareMathOperator{\diag}{diag}
\DeclareMathOperator{\Imag}{Im}
\DeclareMathOperator*{\argmin}{arg\,min}
\pgfplotsset{/pgf/number format/use comma,compat=newest}
\renewcommand\L{\mathcal{L}}
\renewcommand\O{\mathcal{O}}
\renewcommand\S{\mathcal{S}}
\newcommand{\R}{\mathbb{R}}
\newcommand{\vect}[1]{\boldsymbol{#1}}
\newcommand{\tens}[1]{\mathsf{#1}}
\author{D. Riccobelli, A. Agosti and P. Ciarletta}
\address{MOX -- Dipartimento di Matematica, Politecnico di Milano, Milano, Italy.}
\email{davide.riccobelli@sissa.it}
\email{abramo.agosti@polimi.it}
\email{pasquale.ciarletta@polimi.it}
\title[Existence of elastic minimizers for initially stressed materials]{On the existence of elastic minimizers for initially stressed materials}
\begin{document}
\begin{abstract}
A soft solid is said to be initially stressed if it is subjected to a state of internal stress in its unloaded reference configuration.

\noindent Developing a sound mathematical framework to model initially stressed solids in nonlinear elasticity is key for many applications in engineering and biology. This work investigates the links between the existence of elastic minimizers and the constitutive restrictions for initially stressed materials subjected to finite deformations. In particular,  we consider a subclass of constitutive responses in which the strain energy density is taken as a scalar valued function of both the deformation gradient and the initial stress tensor. The main advantage of this approach is that the initial stress tensor belongs to the group of the divergence-free symmetric tensors satisfying the boundary condition in any given reference configuration. However, it is still unclear which physical restrictions must be imposed for the well-posedness of this elastic problem. Assuming that the constitutive response depends on the choice of the reference configuration only through the initial stress tensor, under given conditions we prove the local existence of a relaxed state given by an implicit tensor function of the initial stress distribution. This tensor function is generally not unique, and can be transformed accordingly to the symmetry group of the material at fixed initial stresses. These results allow to extend Ball's existence theorem of elastic minimizers for the proposed  constitutive choice of initially stressed materials.
\end{abstract}
\maketitle

\section{Rivlin's legacy on constitutive equations in nonlinear physics}
Ronald Rivlin made seminal contribution for the development of constitutive theories in continuum physics.

As collected by Barenblatt and Joseph \cite{rivlin1997collected}, Rivlin wrote 31 papers on isotropic finite elasticity, 8 on the anisotropic theory of elasticity and more than 50 papers on the theory of constitutive equations.

Pioneering contributions include the rigorous mathematical theory of isotropic \cite{rivlin1997large} and anisotropic \cite{ericksen1954large, smith1957stress} nonlinear elasticity, that shaped our modern approach to the formulation of constitutive laws in continuum mechanics \cite{rivlin1955stress, rivlin1955further}.

Moreover,  Pipkin and Rivlin \cite{pipkin1959formulation} studied the constitutive restrictions for enforcing the invariance of the physical law by rotating simultaneously both the actual and the reference system, with particular interest in nonlinear elasticity \cite{rivlin1987note}. Rivlin's work also focused on the physical implication of the material symmetries \cite{rivlin1997some} and the well-posedness of the non-linear elastic problems \cite{rivlin1948large, rivlin1948uniqueness}.

Inspired by his works, in this paper we aim at studying the role of physical restrictions  for the development of a constitutive theory of  initially stressed bodies, namely elastic media that are described exploiting a reference configuration that does not coincide with the relaxed one. 

\section{Introduction to initially stressed materials}

 Existence theorems for nonlinear elastic materials have been developed over the last decades on a different mathematical ground than the ones for linear elasticity, that are classically based on the Korn's inequality \cite{fichera1973existence,horgan1995korn,ciarlet2005another}. The required objectivity of the strain energy function of a soft materials indeed conflicts with its convexity \cite{coleman1959thermostatics}. Thus, less restrictive conditions for the existence of the finite elastic solution are imposed using a direct approach in the calculus of  variations, such as quasi-convexity \cite{morrey1952quasi}, quasi-convexity at the border \cite{ball1984quasiconvexity} and polyconvexity \cite{ball1976convexity}. Without attempting an exhaustive mathematical characterization of such seminal results, we emphasize that proving the existence of elastic minimizers is interwoven with the search for physical restrictions on the elastic strain energy function, that classically contains a nonlinear dependence only on the deformation gradient $\tens{F}$.

 \noindent This work investigates the link between the existence of elastic minimizers and the constitutive assumptions for initially stressed materials subjected to finite deformations.  If the unloaded reference configuration is not unstressed, we say that the elastic body is subjected to an initial stress  $\tens{\Sigma}$, that  must satisfy the equilibrium equations.
 Initial stresses are commonly observed in soft materials. In inert matter they can be actively controlled by an external stimulus, e.g. in hydrogels \cite{nardinocchi2017swelling} and in dielectric elastomers \cite{brochu2010advances}. Such stimuli, e.g. an electric field in dielectric elastomers, generate a distortion of the microstructure of the material that is made physically compatible by the emergence of an internal state of stress. In living matter, initial stresses are also known as residual stresses \cite{Hoger1985,johnson1993dependence,johnson1995use,johnson1998use}, and they result from incompatible
 growth processes both in healthy and pathological conditions \cite{stylianopoulos2012causes, ciarletta2016residual}. Such residual stresses not only may enhance the functionality and the efficiency of biological structures, e.g. in  arteries \cite{chuong1986residual},  but they may also be used to trigger a programmed shape transition through a mechanical instability, forming complex patterns such as the intestinal villi \cite{ciarletta2014pattern} or the brain sulci \cite{bayly2014mechanical}.

From a constitutive viewpoint, a well established approach to account for initial stresses  is based on the multiplicative decomposition of the deformation gradient into an elastic deformation tensor and an incompatible tensor \cite{bilby1955continuous,kroner1959allgemeine, Lee1969}. This method was initially applied to provide a kinematic description of crystal plasticity \cite{reina2014kinematic}, and later adapted to describe the volumetric growth in nonlinear elastic media \cite{rodriguez1994stress}.
Assuming a material isomorphism for the strain energy function, the initial stress is constitutively related to the elastic deformation tensor from the virtual incompatible state. Thus, it has been shown that the resulting elasticity tensor constitutively depends on the initial stress  \cite{johnson1993dependence,johnson1995use,johnson1998use}, that in turn may affect the symmetry group of the material  \cite{johnson1995use}. Since the incompatible tensor is not necessarily the gradient of a deformation, it maps the unloaded configuration into a virtual state that may not possess a Euclidean metric \cite{johnson1998use}. Accordingly, the main drawback of this approach is that such a virtual state may not be achieved in physical practice, not even by cutting procedures, and  the incompatible tensor must be assumed a priori in order to generate a self-equilibrated state of initial stresses.

A less restrictive mathematical framework accounts for initial stresses by formulating implicit constitutive equations linking the Helmoltz free energy, the initial stress and the kinematic quantities possibly mapping the evolving natural states of the materials \cite{rajagopal2003implicit}. For soft solids, this constitutive approach has  shown that there exists a far richer class of non-dissipative materials than the class of bodies that is usually understood as being elastic \cite{rajagopal2007response}.

\noindent In this work, we consider a subclass of constitutive responses in which the strain energy function is taken as a scalar valued function of both the deformation gradient and of the initial stress. As first discussed in \cite{shams2011initial}, objectivity is enforced for an initially stressed material made by an isotropic material by considering  a dependence on the ten invariants of the two above mentioned tensors. Under the incompressibility constraint, it has been shown that only eight invariants are independent \cite{shariff2017spectral}. This method has been widely used to model initially stressed materials; applications of this theory include wave propagation in soft media \cite{shams2012rayleigh, ogden2011propagation}, the modeling of residual stress in living tissues \cite{wang2014modified} and the stability of residually stressed materials \cite{ciarletta2016morphology, rodriguez2016helical, riccobelli2017shape}.

\noindent The main advantage of this approach is that the initial stress tensor $\tens{\Sigma}$ belongs to the group of the divergence-free symmetric tensors satisfying the boundary condition in the given reference configuration, whilst it is still unclear which physical restrictions must be imposed for the well-posedness of the elastic problem. A basic constitutive restriction  known as the initial stress compatibility condition (ISCC) imposes that the Cauchy stress reduces to the initial stress when the deformation tensor is equal to the matrix identity \cite{shams2011initial, gower2015initial}. By imposing ISCC and the polyconvexity of the resulting strain energy function in the absence of initial stresses few constitutive relations have been proposed.  A simple functional expression has been proposed in \cite{merodio2016extension}, containing material parameters that also depend on the particular choice of the reference configuration, as generally prescribed by  \cite{truesdell1965non}. A more restrictive constitutive class has been proposed in \cite{gower2015initial}, assuming that the material parameters do not change under a change of reference configuration. This assumption has lead to define a new condition. i.e. the initial stress reference independence (ISRI)  \cite{gower2016new}, that is inspired by the  multiplicative decomposition approach.

\noindent This work aims at clarifying some constitutive aspects of the mathematical theory of initially stressed materials, unraveling the main implications of imposing the ISRI condition on the existence of elastic minimizers.

\noindent The article is organized as follows. In Section 3, we provide some  basic kinematic and constitutive notions for nonlinear elastic materials. In Section 4, we introduce the main differences of the proposed mathematical framework for initially stressed materials with respect to the theory of elastic distortions, discussing the mechanical signification of the ISCC and the ISRI conditions. In Section 5, we prove the local existence of a relaxed state for each material point. In Section 6, we prove that the residual stresses provoke an elastic distortion on the transformation of the symmetry group. We also give en existence theorem  for the elastic minimizers for the proposed  constitutive choice of the initially stressed material.  In Section 7, we use the proposed framework to solve the physical problem of an elastic disc subjected to an anisotropic initial stress. Finally, the results are summarized and critically discussed in the last section.

\section{Background and notation}

\begin{table}[t]
\caption{List of functional spaces}
\label{table_example}
\begin{tabularx}{\textwidth}{cX}
\toprule
Symbol &Definition \\
\midrule
$\L(\R^n)$ &Set of all the linear applications from $\R^n$ to $\R^n$.\\
$\L^+(\R^n)$ &Set of all the $\tens{L}\in\L(\R^n)$ such that $\det\tens{L}>0$.\\
$\O(\R^n)$ &Set of all the orthogonal tensors $\tens{Q}\in\L(\R^n)$, namely all the $\tens{Q}$ such that $\tens{Q}^T\tens{Q}=\tens{I}$.\\
$\O^+(\R^n)$ &Set of all $\tens{Q}\in\mathcal{O}(\R^n)$ such that $\det\tens{Q}>0$.\\
$\S(\R^n)$ &Set of all the symmetric linear applications from $\R^n$ to $\R^n$, namely all the $\tens{L}\in\L(\R^n)$ such that $\tens{L}^T=\tens{L}$\\
$\L^+_1(\R^n)$ &Special unitary group, namely the subset of $\L^+(\R^n)$ with determinant $1$.\\
$\mathcal{D}$ &Denotes  $\L^+(\R^n)$ for a compressible material,  or  $\L^+_1(\R^n)$ for an incompressible material.\\
$C^0(U,\,V)$ &Set of all the continuous function from the set $U\subseteq\R^n$ to the set $V\subseteq\R^N$.\\
$C^k(U,\,V)$ &Set of all the function from the set $U\subseteq\R^n$ to the set $V\subseteq\R^N$ admitting continuous derivatives of order $k$.\\
$L^p(U,\,V)$ &Set of all the function from the set $U\subseteq\R^n$ to the set $V\subseteq\R^N$ with finite $L^p$~norm.\\
$W^{1,p}(U,\,V)$ &Sobolev space of all the functions from the set $U\subseteq\R^n$ to the set $V\subseteq\R^N$, where both the functions and their weak partial derivatives belong to $L^p(U,\,V)$.\\\bottomrule
\end{tabularx}
\vspace*{-4pt}
\end{table}

We denote by $\L(\R^n)$ the set of all the automorphisms of $\R^n$, and with $\L^+(\R^n)$ the group (with respect to the operation of function composition) of all the linear applications belonging to $\L(\R^n)$ with positive determinant.

Let $\O(\R^n)$ be the group such that
\[
\tens{Q}^T\tens{Q}=\tens{I}
\]
where $\tens{Q}\in\L(\R^n)$ and $\tens{I}$ is the identity.

We indicate with $\O^+(\R^n)\subset\O(\R^n)$ the group of all the elements of $\O(\R^n)$ with positive determinant; if $n=3$, this group coincides with the set of the rigid rotations. We also introduce the set $\S(\R^n)$ of all the symmetric linear applications that belong to $\L(\R^n)$.

Let the open set $\Omega_0\subset\R^3$ be the reference configuration of a body and $\vect{X}\in\Omega_0$ the material point. We denote the deformation field by $\vect{\varphi}\in C^2(\Omega_0,\,\R^3)$that maps the reference domain $\Omega_0$ to the actual configuration $\Omega$. 

Accordingly, the deformation gradient reads $\tens{F}=\Grad\vect{\varphi}$. If the body is made of a homogeneous elastic material, we assume a purely elastic constitutive behavior such that  the Cauchy stress tensor $\tens{T}_0$ depends on the deformation gradient $\tens{F}$.

We say that the body has a relaxed reference configuration if
\begin{equation}
\label{eq:relaxed}
\tens{T}_0(\tens{I})=\tens{0},
\end{equation}
where $\tens{T}_0$ is the Cauchy stress and $\tens{I}$ is the identity tensor. 

If the body is composed of a hyperelastic material, we denote its strain energy density in a point $\vect{X}$ with 
$\psi_0(\tens{F}(\vect{X})):\L^+(\R^3)\rightarrow\R$.
Whenever appropriate, we omit the explicit dependence of the physical quantities on the material position $\vect{X}$. The first Piola--Kirchhoff and the Cauchy stress tensor are  given by
\[
\tens{P}_0(\tens{F})=\frac{\partial\psi_0}{\partial\tens{F}}\qquad\tens{T}_0(\tens{F})=\frac{1}{\det\tens{F}}\tens{P}_0(\tens{F})\tens{F}^T,
\]
respectively.

In order to account for an incompressibility constraint, we introduce the following group:
\[
\mathcal{L}^+_\delta(\R^3)=\left\{\tens{F}\in\L^+(\R^n) \;|\;\det\tens{F}=\delta\right\}.
\]

Accordingly, the domain of the strain energy density $\psi_0$ is given by $\mathcal{L}^+_1(\R^3)$, where the argument is the special unitary group. However, it is convenient to introduce an extension of $\psi_0$ to all $\L^+(\R^3)$ and then to use the method of Lagrangian multiplier to enforce the incompressibility constraint. Let $\tilde{\psi}_0:\L^+(\R^3)\rightarrow \R$ such that
\begin{equation}
\label{eq:tildepsi0}
\tilde{\psi}_{0}(\tens{F})=\psi_0(\tens{F})\qquad\forall \tens{F}\in\L^+_1(\R^3),
\end{equation}
a possible extension is given by
\[
\tilde{\psi}_0(\tens{F})=\psi_0((\det \tens{F})^{-1/3}\tens{F}).
\]

So, the first Piola--Kirchhoff and the Cauchy stress tensors are given by
\[
\tens{P}_0(\tens{F},\,p)=\frac{\partial\tilde{\psi}_0}{\partial\tens{F}}-p\tens{F}^{-T}\qquad\tens{T}_0(\tens{F},\,p)=\tens{P}_0(\tens{F},\,p)\tens{F}^T,
\]
where $p$ is the Lagrangian multiplier. For the sake of simplicity, in the
following we will omit the distinction between $\tilde{\psi}_0$ and $\psi_0$  wherever appropriate and we denote by $\mathcal{D}$ either the group $\L^+(\R^3)$, if the material is unconstrained, or the group $\mathcal{L}^+_1(\R^3)$, if the material is incompressible.


We denote by
\begin{equation}
\label{eq:prin_inv}
I_1(\tens{C})=\tr(\tens{C}),\qquad I_2(\tens{C})=\frac{(\tr\tens{C})^2-\tr(\tens{C}^2)}{2},\qquad I_3(\tens{C})=\det(\tens{C}),
\end{equation}
the principal invariants of $\tens{C}$, where $\tens{C}=\tens{F}^T\tens{F}$ is the right Cauchy--Green strain tensor.
%

We finally introduce a fundamental notion for the existence of minimizers in nonlinear elastic materials, known as the non-degeneracy axiom  \cite{ball1976convexity}:
\begin{ax}[Non-degeneracy for a  hyperelastic body]
\label{ax:nondeg}
Let $\psi_0$ be a strain energy density, we say that $\psi_0$ is non-degenerate if
\begin{equation}
\label{eq:non_deg}
\left\{
\begin{aligned}
&\psi_{0}(\tens{F})\rightarrow +\infty&&\text{\rm when }\det\tens{F}\rightarrow 0^+\\
&\psi_{0}(\tens{F})\rightarrow +\infty&&\text{\rm when }\left|\tens{F}\right|+\left|\tens{F}^{-1}\right|\rightarrow +\infty
\end{aligned}
\right.
\end{equation}
where $|\tens{F}|=\sqrt{\tr(\tens{F}^T\tens{F})}$.
\end{ax}

The last condition of \eqref{eq:non_deg} indeed ensures that the hyperelastic energy goes to infinity as soon as one of the principal invariants \eqref{eq:prin_inv} goes to $+\infty$. If the material is incompressible, only the second equation of \eqref{eq:non_deg} applies. 
For the ease of the readers, we collect all the symbols used to denote the functional spaces in Table~\ref{table_example}.

\section{Mathematical frameworks for initially stressed materials}

In this section, we summarize the basic features of two mathematical frameworks used to model nonlinear elastic materials whose unloaded reference configuration is not stress-free, namely the theory of elastic distortions and the theory of initially stressed bodies.

\subsection{The theory of elastic distortions}
\label{sec:distortions}
If the relation \eqref{eq:relaxed} does not hold, the material is subjected to a state of stress in the reference configuration.
A classical constitutive approach consists in assuming a  multiplicative decomposition of the deformation gradient \cite{rodriguez1994stress}, such that:
\begin{equation}
\label{eq:Fdeco}
\tens{F}=\tens{F}_\text{e}\tens{G}.
\end{equation}
where $\tens{G}$ is the tensor field that describes the elastic distortion from the reference configuration to the relaxed one, whilst $\tens{F}_\mathrm{e}$ represents the elastic distortion that restores the geometrical compatibility under the action of external tractions (as depicted in Fig.~\ref{fig:FG}). Since the underlying metric is not Euclidean  whence $\tens{G}$ is not a gradient of a  deformation field, it may be impossible to attain a  stress-free configuration in the physical world.
In the last decades, the distortion tensor $\tens{G}$  has been advocated to model different biological processes,  such as volumetric growth \cite{dicarlo2002growth}, remodelling \cite{epstein2000thermomechanics} and active strains \cite{kondaurov1987finite, taber2000modeling}.

\begin{figure}
\centering
\includegraphics[width=0.7\textwidth]{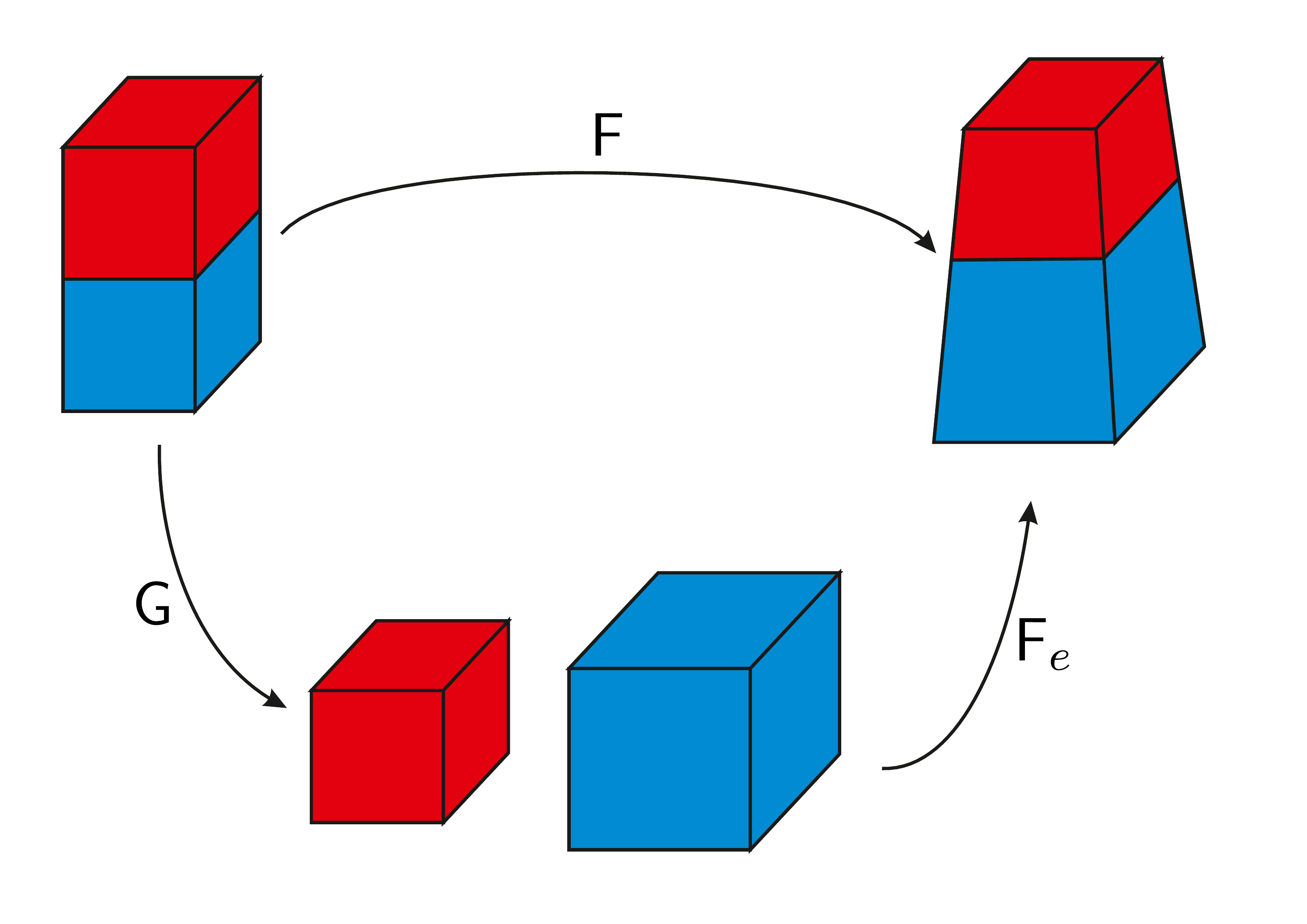}
\caption{Clockwise representation of the reference, the actual and the relaxed configuration described by  the multiplicative decomposition of the deformation gradient given by \eqref{eq:Fdeco}.}
\label{fig:FG}
\end{figure}

In physical practice, it is assumed the initial stress in the body is generated by a distortion of the reference configuration. Consequently, the strain energy function depends on the distorted metric, given by $\tens{F}\tens{G}^{-1}$.

If the material is incompressible, this constraint is imposed on the elastic tensor, whilst the distortion tensor also describes the  local change of volume, such that:
\[
\det\tens{F}_\mathrm{e} = 1\quad\Rightarrow\quad\det\tens{F} = \det\tens{G}=\delta.
\]

Accordingly, the strain energy density of the material is given by :
\begin{equation}
\label{eq:ini_strain}
\psi_\tens{G}(\tens{F})=(\det\tens{G})\psi_0(\tens{F}\tens{G}^{-1}).
\end{equation}
From standard application of the second law of the thermodynamics in the Clausius-Duhem form, the first Piola--Kirchhoff and Cauchy stress read
\begin{equation}
\label{eq:initial_strain}
\tens{P}_\tens{G}(\tens{F})=(\det\tens{G})\frac{\partial\psi_0(\tens{F}\tens{G}^{-1})}{\partial\tens{F}},\qquad\tens{T}_{\tens{G}}(\tens{F})=\frac{1}{\det\tens{F}}\tens{P}_{\tens{G}}(\tens{F})\tens{F}^T.
\end{equation}

%
%
%
%

The theory of distortions provides a transparent explanation for the transformation law of the material properties. Let $\mathcal{G}$ be  the material symmetry group of a hyperelastic material, it is defined as the set of all the tensors $\tens{Q}\in\mathcal{L}^+_1(\R^3)$ such that
\begin{equation}
\label{eq:mat_sym}
\tens{T}(\tens{F})=\tens{T}(\tens{F}\tens{Q}),\qquad\forall\tens{F}\in\mathcal{D};
\end{equation}
where the response function $\tens{T}$ may eventually depend on the local distortion $\tens{G}$.

An equivalent definition for a hyperelastic material can be given as the set of all $\tens{Q}\in\mathcal{L}^+_1(\R^3)$ such that
\[
\psi(\tens{F})=\psi(\tens{F}\tens{Q}),\qquad\forall\tens{F}\in\mathcal{D}.
\]

If we exploit the theory of elastic distortions, let $\mathcal{G}$ be the material symmetry group of the corresponding strain energy density  $\psi_0$. It has been shown that \cite{epstein2015mathematical}:
\begin{equation}
\label{eq:sym_group_dist}
\begin{aligned}
\psi_{\tens{G}}(\tens{F})&=(\det\tens{G})\psi_{0}(\tens{F}\tens{G}^{-1}) =\\
&=(\det\tens{G})\psi_{0}(\tens{F}\tens{G}^{-1}\tens{Q})=\\
&=(\det\tens{G})\psi_{0}(\tens{F}\tens{G}^{-1}\tens{Q}\tens{G}\tens{G}^{-1})=\\
&=\psi_{\tens{G}}(\tens{F}\tens{G}^{-1}\tens{Q}\tens{G})
\end{aligned}
\qquad \forall\tens{Q}\in\mathcal{G}_{0}
\end{equation}
Thus, the material symmetry group of the initially stressed material is given by
\[
\mathcal{G}_{\tens{G}} =  \tens{G}^{-1}\mathcal{G}_{0}\tens{G}.
\]
Notably,  $\mathcal{G}_{\tens{G}}$ is the conjugate group of $\mathcal{G}_{0}$ through $\tens{G}$.

The main drawback of the theory of elastic distortions is that  $\tens{G}$ has to be provided by means of a constitutive assumption. Nonetheless, since the underlying metric may not be Euclidean, the values of its components cannot be directly inferred in many physical problems. An experimental attempt to search for a stress--free configuration consists in performing several (ideally infinite) cuts in the body to release the local stresses stored inside the material \cite{ambrosi2012active, stylianopoulos2012causes, ciarletta2016residual, chuong1986residual, ambrosi2017solid}. Although successful in simple system models \cite{amar2010swelling, moulton2011circumferential, ciarletta2014pattern}, this approach is unsuitable when interested in investigating the effect of a generic state of initial stress on the material response.
In the following, we describe how this difficulty can be circumvented by building a constitutive theory that explicitly depends on the underlying spatial distribution of internal stresses.

\subsection{The theory of initially stressed bodies}

Alternatively, it can be assumed that the material response  depends on both the deformation applied on the body and on the initial stress, intended as  the existing stress field $\tens{\Sigma}$ in the undeformed reference configuration, i.e. the Cauchy stress when the body is undeformed. This assumption has been discussed in  \cite{johnson1993dependence,johnson1995use, johnson1998use}, such that the material response in a point $\vect{X}$ of the body reads
\begin{equation}
\label{eq:resp_funct}
\tens{T}=\tens{T}(\tens{F};\,\tens{\Sigma}(\vect{X})),
\end{equation}
where $\tens{T}$ is the Cauchy stress. We remark that the initial stress tensor field
\[
\tens{\Sigma}:\Omega_0\rightarrow\S(\R^3)
\]
generally depends on the material position vector $\vect{X}$; we omit such an explicit notation in the following for the sake of brevity wherever appropriate. We denote by $\tens{S}\in\S(\R^3)$ the specific expression of the initial stress in a point $\vect{X}$, namely $\tens{S}=\tens{\Sigma}(\vect{X})$ for a given $\vect{X}\in\Omega_0$.


The function $\tens{T}:\mathcal{D}\times\mathcal{S}(\mathbb{R}^3)\rightarrow\mathcal{S}(\mathbb{R}^3)$ must satisfy certain restrictions. First, in the absence of initial stresses, the strain energy function must obey the standard requirements ensuring the existence of elastic minimizers in nonlinear elasticity. Second, the constitutive response should be such that the Cauchy stress is equal to the initial stress in the absence of any elastic deformations. This is referred to as ISCC, i.e. \emph{initial stress compatibility condition}, \cite{shams2011initial}  and reads:
\begin{equation}
\label{eqn:ISCC}
\tens{T}(\tens{I};\,\tens{S})=\tens{S}\qquad  \forall \tens{S} \in \S(\R^3).
\end{equation}

 A subclass of material responses in which the strain energy function depends only on the elastic deformation and the initial stress, but not explicitly on the choice of the reference configuration, has been proposed in \cite{gower2015initial,gower2016new}. Under this constitutive assumption, it is possible to introduce another restriction called Initial Stress Reference Independence (ISRI), stating  that
\begin{equation}
\label{eq:ISRIT}
\tens{T}(\tens{F}_2\tens{F}_1;\,\tens{S})=\tens{T}\left(\tens{F}_2;\,\tens{T}(\tens{F}_1;\,\tens{S})\right), \quad \forall \tens{S} \in \S(\R^3).
\end{equation}

In this work we give a new mechanical interpretation of such a restrictive condition and we discuss its mathematical implications for the existence of elastic minimizers. The condition \eqref{eq:ISRIT} imposes that  there is no energy dissipation resulting from the elastic deformation and represents a frame invariance requirement: the deformation field solution of the elastic problem must not depend on the choice of the reference configuration \cite{gower2015initial, gower2016new}.

If the material is hyperelastic, we can assume that the strain energy function reads \cite{shams2011initial}:
\begin{equation}
\label{eq:ini_stress_ener}
\psi:\L^+(\R^3)\times\S(\R^3)\rightarrow\R.
\end{equation}

Recalling the values of the material parameters are assumed to be independent on the choice of the initially stressed configuration, the  first Piola--Kirchhoff  tensor $\tens{P}$ and the Cauchy stress tensor $\tens{T}$ read:
\begin{equation}
\label{eq:stresses}
\tens{P}(\tens{F};\,\tens{S})=\frac{\partial \psi}{\partial \tens{F}}(\tens{F};\,\tens{S}),\qquad\tens{T}(\tens{F};\,\tens{S})=\frac{1}{\det \tens{F}}\tens{P}(\tens{F};\,\tens{S})\tens{F}^T,\qquad \tens{F}\in\mathcal{L}^+(\R^3), \quad \tens{S}\in\S(\R^3).
\end{equation}

Under the incompressibility constraint, the strain energy density is a function such that
\[
\psi:\L^+_1(\R^3)\times\S(\R^3)\rightarrow\R,
\]
as done in \eqref{eq:tildepsi0}, we introduce an extension $\tilde{\psi}$ of $\psi$ to all $\L^+(\R^3)$ to define the stress tensors, namely
\begin{equation}
\label{eq:psitilde}
\tilde{\psi}(\tens{F};\,\tens{S})=\psi(\tens{F};\,\tens{S})\qquad\forall\tens{F}\in\L^+_1(\R^3), \quad \forall\tens{S}\in\S(\R^3);
\end{equation}
a possible extension is given by:
\[
\tilde{\psi}(\tens{F};\,\tens{S})=\psi((\det\tens{F})^{-1/3}\tens{F};\,\tens{S}).
\]
The Piola--Kirchhoff and the Cauchy stress tensors are given by
\[
\left\{
\begin{aligned}
&\tens{P}(\tens{F},\,p;\,\tens{S})=\frac{\partial \tilde{\psi}}{\partial \tens{F}}(\tens{F};\,\tens{S})-p\tens{F}^{-T},\\
&\tens{T}(\tens{F},\,p;\,\tens{S})=\tens{P}(\tens{F},\,p;\,\tens{S})\tens{F}^T,
\end{aligned}
\right.\quad \tens{F}\in\mathcal{L}^+_1(\R^3),\quad \tens{S}\in\S(\R^3).\]

For the sake of simplicity, we omit the difference between $\tilde{\psi}$ and $\psi$ wherever appropriate.
For hyperelastic materials, we remind that \eqref{eq:ISRIT} can be reformulated as  an equivalent condition to be imposed on the functional dependence of the strain energy function \cite{gower2016new}. We give further mathematical details of this important result in the following, proving that the restriction imposed on the strain energy density is a consequence of \eqref{eq:ISRIT}.
\begin{prop}
Let $\psi:\mathcal{D}\times\S(\mathbb{R}^3)\rightarrow\R$ be a strain energy density, and assume that the ISCC \eqref{eqn:ISCC} and the ISRI conditions \eqref{eq:ISRIT} hold. Then, for all $\tens{F}_1,\,\tens{F}_2\in\mathcal{D}$ and for all $\tens{S}\in\S(\R^3)$, the following relation must hold:
\begin{equation}
\label{eq:ISRIW}
\psi(\tens{F}_2\tens{F}_1;\,\tens{S})=(\det{\tens{F}_1})\psi(\tens{F}_2;\,\tens{T}(\tens{F}_1;\,\tens{S})).
\end{equation}
\end{prop}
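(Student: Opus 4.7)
The natural strategy is to show that both sides of (\ref{eq:ISRIW}), viewed as functions of $\tens{F}_2$ with $\tens{F}_1$ and $\tens{S}$ held fixed, have the same gradient, and then to pin down the remaining additive constant. Starting from the chain rule, I would compute
\[
\frac{\partial}{\partial \tens{F}_2}\psi(\tens{F}_2\tens{F}_1;\tens{S}) = \tens{P}(\tens{F}_2\tens{F}_1;\tens{S})\,\tens{F}_1^T.
\]
Using the relation $\tens{P}(\tens{F};\tens{S}) = (\det \tens{F})\,\tens{T}(\tens{F};\tens{S})\,\tens{F}^{-T}$ from (\ref{eq:stresses}) and noting that $(\tens{F}_2\tens{F}_1)^{-T} = \tens{F}_2^{-T}\tens{F}_1^{-T}$, the right-hand side simplifies to $\det(\tens{F}_2)\det(\tens{F}_1)\,\tens{T}(\tens{F}_2\tens{F}_1;\tens{S})\,\tens{F}_2^{-T}$.

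At this point I would invoke ISRI (\ref{eq:ISRIT}) to substitute $\tens{T}(\tens{F}_2\tens{F}_1;\tens{S}) = \tens{T}(\tens{F}_2;\tens{T}(\tens{F}_1;\tens{S}))$, which turns the previous expression into $\det(\tens{F}_1)\,\det(\tens{F}_2)\,\tens{T}(\tens{F}_2;\tens{T}(\tens{F}_1;\tens{S}))\,\tens{F}_2^{-T} = \det(\tens{F}_1)\,\tens{P}(\tens{F}_2;\tens{T}(\tens{F}_1;\tens{S}))$. Since $\det(\tens{F}_1)$ and the stress $\tens{T}(\tens{F}_1;\tens{S})$ are independent of $\tens{F}_2$, this last expression is exactly $\partial[\det(\tens{F}_1)\,\psi(\tens{F}_2;\tens{T}(\tens{F}_1;\tens{S}))]/\partial \tens{F}_2$. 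Because $\mathcal{D}$ is path-connected, equality of the two $\tens{F}_2$-gradients forces
\[
\psi(\tens{F}_2\tens{F}_1;\tens{S}) - (\det \tens{F}_1)\,\psi(\tens{F}_2;\tens{T}(\tens{F}_1;\tens{S})) = C(\tens{F}_1;\tens{S})
\]
for some function $C$ that does not depend on $\tens{F}_2$.

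The anticipated obstacle is showing $C \equiv 0$. Evaluating at $\tens{F}_2 = \tens{F}_1 = \tens{I}$ and applying ISCC (\ref{eqn:ISCC}) gives $C(\tens{I};\tens{S}) = 0$, but for a general $\tens{F}_1$ one only obtains $C(\tens{F}_1;\tens{S}) = \psi(\tens{F}_1;\tens{S}) - (\det \tens{F}_1)\,\psi(\tens{I};\tens{T}(\tens{F}_1;\tens{S}))$. To remove this ambiguity I would apply the factorization twice in two different orders (writing $\psi(\tens{F}_3\tens{F}_2\tens{F}_1;\tens{S})$ both as $(\det \tens{F}_1)\psi(\tens{F}_3\tens{F}_2;\tens{T}(\tens{F}_1;\tens{S})) + C(\tens{F}_1;\tens{S})$ and as $(\det \tens{F}_2\tens{F}_1)\psi(\tens{F}_3;\tens{T}(\tens{F}_2\tens{F}_1;\tens{S})) + C(\tens{F}_2\tens{F}_1;\tens{S})$, then using ISRI again on the former) to derive the cocycle identity $C(\tens{F}_2\tens{F}_1;\tens{S}) = (\det \tens{F}_1)\,C(\tens{F}_2;\tens{T}(\tens{F}_1;\tens{S})) + C(\tens{F}_1;\tens{S})$. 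Combined with the fact that, on physical grounds, $\psi$ is defined only up to an additive function of $\tens{S}$ (which does not affect $\tens{P}$ or $\tens{T}$ and hence neither ISCC nor ISRI), this cocycle structure allows one to absorb $C$ into the normalization of $\psi(\tens{I};\cdot)$, concluding $C \equiv 0$ and establishing (\ref{eq:ISRIW}).
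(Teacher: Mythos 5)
Up to the integration step, your argument is the paper's own proof in slightly different clothing: the paper likewise rewrites \eqref{eq:ISRIT} as an identity between gradients of $\psi$ via \eqref{eq:stresses} and the chain rule, and integrates over the connected set $\mathcal{D}$ to get $\psi(\tens{F}_2\tens{F}_1;\,\tens{S})=(\det\tens{F}_1)\,\psi(\tens{F}_2;\,\tens{T}(\tens{F}_1;\,\tens{S}))+C$. Moreover, your diagnosis of the weak point is sharper than the paper's treatment: the integration ``constant'' is constant only in $\tens{F}_2$, hence a function $C(\tens{F}_1;\,\tens{S})$, and the paper's move of setting $\tens{F}_1=\tens{I}$ and invoking \eqref{eqn:ISCC} only yields $C(\tens{I};\,\tens{S})=0$, not $C\equiv 0$. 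Your cocycle identity is also correct.

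The genuine gap is your final step. ``Absorbing $C$ into the normalization of $\psi(\tens{I};\,\cdot)$'' means replacing $\psi$ by $\psi+g(\tens{S})$; at best this establishes \eqref{eq:ISRIW} for the modified energy, not for the given one, which is what the Proposition asserts. And no argument can do better, because the hypotheses cannot imply the conclusion as literally stated: ISCC \eqref{eqn:ISCC} and ISRI \eqref{eq:ISRIT} are conditions on $\tens{T}$ alone, hence invariant under $\psi\mapsto\psi+g(\tens{S})$, while \eqref{eq:ISRIW} is not. Concretely, take any $\psi$ satisfying \eqref{eqn:ISCC}, \eqref{eq:ISRIT} and \eqref{eq:ISRIW} (such energies exist, e.g.\ those of distortion form via Theorem \ref{thm:uniqueness}); then for $c\neq 0$ the energy $\psi+c$ still satisfies \eqref{eqn:ISCC} and \eqref{eq:ISRIT}, since the stress tensors are unchanged, but the two sides of \eqref{eq:ISRIW} now differ by $c\,(1-\det\tens{F}_1)$, which is nonzero in the compressible case (in the incompressible case, add a non-constant $g(\tens{S})$ instead). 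The same example defeats your cohomological route: $C(\tens{F}_1;\,\tens{S})=c\,(1-\det\tens{F}_1)$ satisfies your cocycle identity \emph{and} $C(\tens{I};\,\tens{S})=0$, yet $C\not\equiv 0$, so the cocycle structure plus vanishing at the identity does not force triviality. The honest conclusion is that the Proposition holds only modulo a normalization fixing the additive gauge of $\psi$ (equivalently, one must additionally postulate \eqref{eq:ISRIW} at $\tens{F}_2=\tens{I}$, or adopt the energy form of ISRI as the definition, as in \cite{gower2016new}). This gauge gap afflicts the paper's own proof as well --- you found the right problem --- but your proposed repair does not close it.
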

\begin{proof}
For the sake of brevity, let $\psi$ be the strain energy of a compressible material (the incompressible case is analogous). The ISRI condition \eqref{eq:ISRIT} reads
\begin{equation}
\label{eq:thm1}
\frac{1}{\det{\tens{F}}_1\det{\tens{F}}_2}\frac{\partial\psi}{\partial\tens{F}}(\tens{F}_2\tens{F}_1;\,\tens{S})\tens{F}_1^{T}\tens{F}_2^{T}=\frac{1}{\det{\tens{F}}_2}\frac{\partial\psi}{\partial\tens{F}_2}(\tens{F}_2;\,\tens{T}(\tens{F}_1;\,\tens{S}))\tens{F}_2^{T}.
\end{equation}

Since $\tens{F}=\tens{F}_2\tens{F}_1$, then, by using the chain rule, we obtain
\[
\frac{\partial \psi}{\partial \tens{F}_2}=\frac{\partial \psi}{\partial\tens{F}}\tens{F}_1^T
\]

and the equation \eqref{eq:thm1} becomes
\[
\frac{\partial\psi}{\partial\tens{F}}(\tens{F}_2\tens{F}_1;\,\tens{S})=\det{\tens F}_1\frac{\partial\psi (\tens{F}\tens{F}_1^{-1},\,\tens{T}(\tens{F}_1;\,\tens{S}))}{\partial\tens{F}}.
\]
We find that $\psi(\tens{F}_2\tens{F}_1;\,\tens{S})=(\det{\tens{F}_1})\psi(\tens{F}_2;\,\tens{T}(\tens{F}_1;\,\tens{S}))+C$. Setting $\tens{F}_1=\tens{I}$ and making use of \eqref{eqn:ISCC} we find  that $C=0$ and we get the claim.
\end{proof}



In the next sections, we prove the local existence of a relaxed state around each material point and a theorem on the existence of elastic minimizers for a strain energy of the form given by \eqref{eq:ini_stress_ener}.

\section{Existence of a relaxed state}

In the theory of elastic distortions, we must  provide a constitutive form for the tensor field that we should apply locally to each point in the reference configuration to obtain the (virtual) relaxed one \cite{rodriguez1994stress}. This theoretical framework has strong mathematical properties. Indeed, if $\psi_0$ is polyconvex, then also $\psi_\tens{G}$ (defined in \eqref{eq:ini_strain}) inherits such a property \cite{neff}. As discussed earlier, this approach is straightforward but only suitable in simple system models, since it requires the a priori knowledge of the virtual relaxed state. 

In this section we prove a theorem on the existence of relaxed configuration using the constitutive framework of initially stressed bodies. Moreover, we prove that, if a strain energy satisfy the ISRI \eqref{eq:ISRIW} and $\psi(\cdot,\tens{0})$ is polyconvex, then $\psi(\cdot,\,\tens{S})$ is polyconvex for all $\tens{S}\in\S(\R^3)$.

First, we give  the following  statement of the non-degeneracy axiom for this class of materials.
\begin{ax}[Non-degeneracy for an initially stressed body]
\label{ax:nondeg2}
Let $\psi:\L^+(\R^3)\times\S(\R^3)\rightarrow\R$ be the strain energy density of an initially stressed body. We say that $\psi$ is non-degenerate if
\begin{equation}
\label{eq:non_deg_stressed}
\left\{
\begin{aligned}
&\psi(\tens{F};\,\tens{S})\rightarrow +\infty&&\text{when }\det\tens{F}\rightarrow 0^+,\\
&\psi(\tens{F};\,\tens{S})\rightarrow +\infty&&\text{when }\left|\tens{F}\right|+\left|\tens{F}^{-1}\right|\rightarrow +\infty.
\end{aligned}
\right.\qquad\forall \tens{S}\in \mathcal{S}(\R^3).
\end{equation}
\end{ax}

We now derive the existence of a point-like relaxed state associated to each point of the initially stressed configuration.  In \cite{johnson1995use}, a stress-free virtual state is defined for each material point in the initially stressed configuration by considering the limiting  behavior  as the radius of the spherical neighborhood tends to zero. Its existence required the following  hypotheses:  
$\tens{\Sigma}\in C^1(\Omega_0,\mathcal{S}(\R^3))$, $\psi=\psi(\tens{F};\,\tens{\Sigma}(\vect{X}))$ being twice differentiable with respect to both arguments, and the distortion from the neighborhood of each point to the free state to be once differentiable in space. Here we are going to obtain a proof of existence of a virtual state using weaker hypotheses.

\begin{thm}[Existence of a relaxed state]
\label{thm:existence}
Let $\psi$ be a non-degenerate strain energy density in the sense of \eqref{eq:non_deg_stressed}. We also assume that $\psi(\cdot,\tens{S})$ is at least $C^1$ and proper, i.e. it is not identically equal to $+\infty$, for all $\tens{S}\in\S(\R^3)$.

Then, for each $\vect{X} \in \Omega_0$, given $\tens{\Sigma}(\vect{X})\in\S(\R^3)$, there exists a local distortion $\tens{G}_{\tens{\Sigma}(\vect{X})}$ such that 
\[
\tens{T}(\tens{G}_{\tens{\Sigma}(\vect{X})};\,\tens{\Sigma}(\vect{X}))=\tens{0}.
\]
\end{thm}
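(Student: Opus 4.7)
The plan is to obtain $\tens{G}_{\tens{\Sigma}(\vect{X})}$ as a minimizer of the function $\tens{F}\mapsto \psi(\tens{F};\,\tens{S})$ on $\mathcal{D}$, with $\tens{S}:=\tens{\Sigma}(\vect{X})$ fixed. The first step is a simple algebraic reduction: any interior stationary point of $\psi(\cdot;\,\tens{S})$ on $\mathcal{D}$ is a zero of the Cauchy stress. In the compressible case, $\partial\psi/\partial\tens{F}(\tens{G};\,\tens{S})=\tens{0}$ forces $\tens{P}(\tens{G};\,\tens{S})=\tens{0}$ and hence $\tens{T}(\tens{G};\,\tens{S})=\tens{0}$ by invertibility of $\tens{G}^{T}$. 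In the incompressible case, stationarity on the manifold $\{\det\tens{F}=1\}$ produces via Lagrange multipliers a scalar $p$ with $\partial\tilde\psi/\partial\tens{F}(\tens{G};\,\tens{S})=p\tens{G}^{-T}$, so that $\tens{P}(\tens{G},\,p;\,\tens{S})=\tens{0}$ and consequently $\tens{T}(\tens{G},\,p;\,\tens{S})=\tens{0}$. The theorem therefore reduces to producing a minimizer of $\psi(\cdot;\,\tens{S})$ in the interior of $\mathcal{D}$.

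For existence I would apply the direct method. Since $\psi(\cdot;\,\tens{S})$ is proper, choose $\tens{F}_{0}\in\mathcal{D}$ with $c:=\psi(\tens{F}_{0};\,\tens{S})<+\infty$ and consider the sublevel set
\[
K:=\{\tens{F}\in\mathcal{D}\mid \psi(\tens{F};\,\tens{S})\leq c\}.
\]
By the non-degeneracy Axiom~\ref{ax:nondeg2}, the quantities $|\tens{F}|$, $|\tens{F}^{-1}|$ and, in the compressible case, $(\det\tens{F})^{-1}$ are uniformly bounded on $K$, so $K$ is bounded and stays away from the singular boundary of $\L(\R^{3})$. Since $\psi(\cdot;\,\tens{S})$ is continuous (from the $C^{1}$ hypothesis), $K$ is closed in $\L(\R^{3})$, and together these two facts make $K$ compact and contained in the interior of $\mathcal{D}$. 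A continuous function on a non-empty compact set attains its minimum, producing $\tens{G}$ in the interior of $\mathcal{D}$ with $\psi(\tens{G};\,\tens{S})=\inf_{\mathcal{D}}\psi(\cdot;\,\tens{S})$.

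Because $\tens{G}$ is an interior minimizer, the first-order optimality condition applies, yielding $\partial\psi/\partial\tens{F}(\tens{G};\,\tens{S})=\tens{0}$ in the compressible case or the Lagrange identity above in the incompressible one. The reduction of the first paragraph then gives $\tens{T}(\tens{G};\,\tens{S})=\tens{0}$, and setting $\tens{G}_{\tens{\Sigma}(\vect{X})}:=\tens{G}$ closes the argument.

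The only delicate point is the compactness of the sublevel sets $K$, and this is precisely what Axiom~\ref{ax:nondeg2} is designed to deliver: without the blow-up at $\det\tens{F}\to 0^{+}$ and at $|\tens{F}|+|\tens{F}^{-1}|\to +\infty$, a minimizing sequence could escape to the boundary of $\mathcal{D}$ or to infinity, and the infimum need not be attained. Note that uniqueness of $\tens{G}_{\tens{\Sigma}(\vect{X})}$ is not asserted, which is consistent with the expectation that the material symmetry group generates a whole family of relaxed states, as announced in the next section.
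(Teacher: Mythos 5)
Your proposal is correct and follows essentially the same route as the paper: both obtain $\tens{G}_{\tens{\Sigma}(\vect{X})}$ as an interior minimizer of $\tens{F}\mapsto\psi(\tens{F};\,\tens{S})$ via the direct method, using the non-degeneracy axiom to confine a bounded sublevel (respectively, preimage) set away from the boundary of $\mathcal{D}$, and then pass from first-order stationarity (with a Lagrange multiplier in the incompressible case) to the vanishing of the Cauchy stress. The only cosmetic difference is that you work with the sublevel set $\{\psi\leq c\}$ while the paper takes the preimage of an interval $[m,M]$; the substance is identical.
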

\begin{proof}
Let $\psi$ be the strain energy of a compressible material.

We denote by $f_{\tens{\Sigma}(\vect{X})}=\psi(\cdot;\,\tens{\Sigma}(\vect{X}))$. The domain of the function $f_{\tens{\Sigma}(\vect{X})}$ is given by $\mathcal{D}=\mathcal{L}^+(\R^3)$. Thus, from the non-degenericity axiom \eqref{ax:nondeg2}, we get
\begin{equation}
\label{eq:fdomain}
f_{\tens{\Sigma}(\vect{X})}(\tens{F})\rightarrow +\infty\qquad \text{ when } \det\tens{F}\rightarrow 0^+ \text{ or }|\tens F|\rightarrow +\infty.
\end{equation}

Since the function $f_{\tens{\Sigma}(\vect{X})}$ is continuous and proper, it must be bounded from below, hence there exists a value $m\in\R$ such that $f_{\tens{\Sigma}(\vect{X})}(\tens{F})> m$ for all $\tens{F}$. Moreover, there exists a value $M>m$ such that
\begin{equation}
\label{eq:setU}
f_{\tens{\Sigma}(\vect{X})}^{-1}\left([m,\,M]\cap \Imag(f_{\tens{\Sigma}(\vect{X})})\right)=U\neq\emptyset,\qquad U\subset\mathcal{L}^+(\R^3).
\end{equation}
where with $f_{\tens{\Sigma}(\vect{X})}^{-1}(A)$ we denote the pre--image of the subset $A\subseteq \Imag(f_{\tens{\Sigma}(\vect{X})})$ through the function $f_{\tens{\Sigma}(\vect{X})}$.

The non-empty set $U$ is bounded as a direct consequence of the coercivity property expressed in \eqref{eq:non_deg_stressed}. 
Thus, there exists a minimum of $f_{\tens{\Sigma}(\vect{X})}$ in $\bar{U}$, where $\bar{U}$ is the closure of $U$.

The tensor that realizes such a minimum may be not unique as exposed in the following Remark \ref{rem:uni}. Let us denote with $\tens{G}_{\tens{\Sigma}(\vect{X})}$ one of them. Since \eqref{eq:fdomain} holds, the tensors $\tens{G}_{\tens{\Sigma}(\vect{X})}$ cannot belong to the boundary of the set $\mathcal{L}^+(\R^3)$ and it is a critical point for $\psi(\cdot;\,\tens{\Sigma}(\vect{X}))$.

From \eqref{eq:stresses}, we get
\begin{equation}
\label{eqn:firstordercondition}
\tens{T}(\tens{G}_{\tens{\Sigma}(\vect{X})};\,\tens{\Sigma}(\vect{X}))=\tens{0}.
\end{equation}

If the material is incompressible, following the same argument, there exists a tensor $\tens{G}_{\tens{\Sigma}(\vect{X})}$ such that
\[
\tens{G}_{\tens{\Sigma}(\vect{X})}\in\argmin_{\tens{F}\in \mathcal{D}}f_{\tens{\Sigma}(\vect{X})}(\tens{F})
\]
where in this case $\mathcal{D}=\mathcal{L}^+_1(\R^3)$. We define a new function $\hat{\psi}$ such that
\[
\left\{
\begin{aligned}
&\hat\psi:\L^+(\R^3)\times\R\times\mathcal{S}(\R^3)\rightarrow\R\\
&\hat\psi(\tens{F},\,p;\,\tens{S})=\tilde{\psi}(\tens{F};\,\tens{S})-p(\det\tens{F}-1)
\end{aligned}
\right.
\]
where $\tilde{\psi}$ is an extension of $\psi$ as defined in \eqref{eq:psitilde}.

Since $\tens{G}_{\tens{\Sigma}(\vect{X})}$ is a minimum for $\psi$ in $\mathcal{L}^+_1(\R^3)$, there exists a $p_{\tens{\Sigma}(\vect{X})}\in\R$ such that $(\tens{G}_{\tens{\Sigma}(\vect{X})}, p_{\tens{\Sigma}(\vect{X})})$ is a critical point for $\hat\psi$ \cite{ekeland1999convex}, so that
\[
\frac{\partial\hat\psi}{\partial\tens{F}}(\tens{G}_{\tens{\Sigma}(\vect{X})},\,p_{\tens{\Sigma}(\vect{X})};\,\tens{\Sigma}(\vect{X}))=\frac{\partial\tilde \psi}{\partial\tens{F}}(\tens{G}_{\tens{\Sigma}(\vect{X})};\,\tens{\Sigma}(\vect{X}))-p_{\tens{\Sigma}(\vect{X})}\tens{G}_{\tens{\Sigma}(\vect{X})}^{-T}=\tens{0},
\]
and thus $\tens{T}(\tens{G}_{\tens{\Sigma}(\vect{X})},\,p_{\tens{\Sigma}(\vect{X})};\,\tens{\Sigma}(\vect{X}))=\tens{0}$. This concludes the proof.
\end{proof}

\begin{rem}
\label{rem:uni}
Given an initial stress tensor, this Theorem implies that there exists a tensor $\tens{G}_{\tens{\Sigma}(\vect{X})}$ that locally maps the body to an unstressed state. Such a distortion is not unique in general: if $\tens{Q}$ belongs to the material symmetry group of $\psi$, then also $\tens{T}(\tens{G}_{\tens{\Sigma}(\vect{X})}\tens{Q};\,\tens{\Sigma}(\vect{X}))=\tens{0}$.
\end{rem}

\begin{rem}
\label{rem:localbound}
The collection of local maps
\[
\widehat{\tens{G}}[\tens{\Sigma}](\vect{X}):=\tens{G}_{\tens{\Sigma}(\vect{X})},
\]
which transform each point of the reference configuration into a point in the local unstressed virtual state, satisfies $\widehat{\tens{G}}[\tens{\Sigma}]\in \mathcal{B}(\Omega_0,\,\mathcal{D})$, where $\mathcal{B}(\Omega_0,\,\mathcal{D})$ denotes the set of all the bounded function $f:\Omega_0\rightarrow \mathcal{D}$. Moreover, the tensor map $\widehat{\tens{G}}[\tens{\Sigma}]$ may not be geometrically compatible, i.e. there could not exist any differentiable vector field $\vect{\varphi}_{\widehat{\tens{G}}}$ such that $\Grad\vect{\varphi}_{\widehat{\tens{G}}} = \widehat{\tens{G}}[\tens{\Sigma}]$. In this case,  there does not exist a deformation that maps the reference configuration of the  residually stressed material into a relaxed one. In fact, assuming that the reference configuration is simply connected, such a deformation exists if and only if
\[
\rot\widehat{\tens{G}}[\tens{\Sigma}] = \mathbf{0}.
\]
In the following, we call $\widehat{\tens{G}}[\tens{\Sigma}]$ the \textit{relaxing map}.
\end{rem}

\begin{rem}
\label{rem:homogeneous}
By simple application of the mean stress theorem \cite{gurtin1973linear}, in the absence of surface tractions and body forces we obtain
\[
\frac{1}{|\Omega_0|}\int_{\Omega_0}\tens{\Sigma}\,d\vect{X} = \frac{1}{|\Omega_0|}\left(\int_{\partial\Omega_0}\vect{X}\otimes(\tens{\Sigma}\vect{N})\,dS -\int_{\Omega_0}\vect{X}\otimes\Diver\tens{\Sigma}\,d\vect{X}\right) = \tens{0}
\]
so that the mean value of the initial stress tensor is zero. Thus, the Cartesian components of the residual stress tensor are necessarily  spatially inhomogeneous whenever $\tens{\Sigma}\neq\tens{0}$ \cite{hoger1986determination}. Accordingly, the functional form of the map $\widehat{\tens{G}}[\tens{\Sigma}]$ is also inhomogeneous. 
A homogeneous initial stress $\tens{\Sigma}$ can only exist if surface tractions or body forces are applied.
\end{rem}

\begin{rem}
\label{rem:zeromeasure}
Note that in the case in which $\tens{\Sigma}$ has singular values over a set $S_{\infty}\subset \Omega_0$ we are requiring the hypothesis that $\psi(\cdot;\,\tens{\Sigma}(\vect{X}))$ remains a proper function, i.e. it is not identically equal to $+\infty$, when $\vect{X}\in S_{\infty}$. Due to the continuity of $\psi(\cdot,\tens{S})$, this ensures that $\widehat{\tens{G}}[\tens{\Sigma}](\vect{X})$ is bounded when $\vect{X}\in S_{\infty}$, as it will be shown in Section \ref{sec:7}.
\end{rem}

%

\section{Existence of elastic minimizers for initially stressed bodies}

In this section, we prove that if the strain energy density $\psi$ satisfies the assumption of Theorem \ref{thm:existence}, then $\psi$ satisfies the ISRI \eqref{eq:ISRIW} if and only if it is expressible using the theory of elastic distortion \eqref{eq:ini_strain}.

\begin{thm}
\label{thm:uniqueness}
Let $\psi$  satisfy the hypotheses of Theorem \ref{thm:existence} and the ISCC condition. We denote by $\psi_0$ the strain energy of the material in the absence of initial stresses, being
\[
\psi_0(\tens{F})=\psi(\tens{F};\,\tens{0}).
\]

Then, the function $\psi$ satisfy the ISRI \eqref{eq:ISRIW} if and only if we can express it as
 \[
\psi(\tens{F};\,\tens{\Sigma}(\vect{X}))=(\det\widehat{\tens G}[\tens{\Sigma}](\vect{X}))\psi_0(\tens{F}\widehat{\tens{G}}[\tens{\Sigma}](\vect{X})^{-1})=\psi_{\widehat{\tens{G}}[\tens{\Sigma}](\vect{X})}(\tens{F})
\]
where $\widehat{\tens{G}}[\tens{\Sigma}](\vect{X})$ is a function such that
\[
\tens{T}(\widehat{\tens{G}}[\tens{\Sigma}](\vect{X});\,\tens{\Sigma}(\vect{X}))=\tens{0}.
\]
\end{thm}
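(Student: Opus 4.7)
My plan is to prove the two implications separately. The ``only if'' direction (ISRI $\Rightarrow$ distortion form) is an almost immediate consequence of Theorem~\ref{thm:existence} through a single change of variables. The ``if'' direction (distortion form $\Rightarrow$ ISRI) is a direct algebraic check based on recognising that the transformation $\tens{G}\mapsto \tens{G}\tens{F}_1^{-1}$ is the natural update rule for the relaxing map under a change of reference configuration by $\tens{F}_1$.

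For the ``only if'' direction, I would fix $\vect{X}\in\Omega_0$, invoke Theorem~\ref{thm:existence} to obtain $\widehat{\tens{G}}[\tens{\Sigma}](\vect{X})$ with $\tens{T}(\widehat{\tens{G}}[\tens{\Sigma}];\tens{\Sigma})=\tens{0}$, and substitute $\tens{F}_1=\widehat{\tens{G}}[\tens{\Sigma}](\vect{X})$ into \eqref{eq:ISRIW}. The right-hand side collapses to $(\det\widehat{\tens{G}}[\tens{\Sigma}])\,\psi(\tens{F}_2;\tens{0})=(\det\widehat{\tens{G}}[\tens{\Sigma}])\,\psi_0(\tens{F}_2)$, and the change of variable $\tens{F}=\tens{F}_2\widehat{\tens{G}}[\tens{\Sigma}](\vect{X})$ returns the claimed distortion form.

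For the ``if'' direction, I would assume $\psi(\tens{F};\tens{\Sigma})=(\det\tens{G})\,\psi_0(\tens{F}\tens{G}^{-1})$ with $\tens{G}=\widehat{\tens{G}}[\tens{\Sigma}](\vect{X})$. A short derivative calculation, combined with \eqref{eq:stresses}, yields $\tens{T}(\tens{F};\tens{\Sigma})=\tens{T}_0(\tens{F}\tens{G}^{-1})$, so that $\tens{S}':=\tens{T}(\tens{F}_1;\tens{\Sigma})=\tens{T}_0(\tens{F}_1\tens{G}^{-1})$. Applying ISCC to $\tens{S}'$ gives $\tens{T}_0(\widehat{\tens{G}}[\tens{S}']^{-1})=\tens{S}'=\tens{T}_0(\tens{F}_1\tens{G}^{-1})$, which identifies $\widehat{\tens{G}}[\tens{S}']=\tens{G}\tens{F}_1^{-1}$ (up to material symmetry, see below). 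Substituting this relaxing map into the distortion form written for $\tens{S}'$, the right-hand side of \eqref{eq:ISRIW} becomes
\[
(\det\tens{F}_1)\,(\det(\tens{G}\tens{F}_1^{-1}))\,\psi_0(\tens{F}_2\tens{F}_1\tens{G}^{-1})=(\det\tens{G})\,\psi_0(\tens{F}_2\tens{F}_1\tens{G}^{-1})=\psi(\tens{F}_2\tens{F}_1;\tens{\Sigma}),
\]
which is exactly \eqref{eq:ISRIW}.

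The main obstacle is the non-uniqueness highlighted in Remark~\ref{rem:uni}: two relaxing maps for the same initial stress differ by an element of the material symmetry group of $\psi_0$, appropriately conjugated as in \eqref{eq:sym_group_dist}. The identification $\widehat{\tens{G}}[\tens{S}']=\tens{G}\tens{F}_1^{-1}$ is therefore only determined up to such a factor. Since elements of the symmetry group lie in $\mathcal{L}^+_1(\R^3)$ and leave $\psi_0$ invariant, both the prefactor $\det\widehat{\tens{G}}[\tens{S}']$ and the evaluation $\psi_0(\tens{F}_2\,\widehat{\tens{G}}[\tens{S}']^{-1})$ are unchanged, so the identity above is well-defined and independent of the representative chosen.
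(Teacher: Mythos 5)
Your ``only if'' direction is exactly the paper's argument: apply \eqref{eq:ISRIW} with $\tens{F}_1=\widehat{\tens{G}}[\tens{\Sigma}]$ (equivalently, write $\tens{F}=\tens{F}\widehat{\tens{G}}[\tens{\Sigma}]^{-1}\widehat{\tens{G}}[\tens{\Sigma}]$), use $\tens{T}(\widehat{\tens{G}}[\tens{\Sigma}];\,\tens{\Sigma})=\tens{0}$ from Theorem~\ref{thm:existence} to collapse the second argument to $\tens{0}$, and relabel. That half is correct and complete.

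The ``if'' direction is where you part ways with the paper, which simply cites \cite{gower2016new} for the fact that $\psi_{\widehat{\tens{G}}[\tens{\Sigma}]}$ satisfies the ISRI; your attempted direct proof has a genuine gap. From $\tens{T}_0(\widehat{\tens{G}}[\tens{S}']^{-1})=\tens{S}'=\tens{T}_0(\tens{F}_1\tens{G}^{-1})$ you conclude that $\widehat{\tens{G}}[\tens{S}']=\tens{G}\tens{F}_1^{-1}$ up to an element of the material symmetry group. But equality of the Cauchy stress at two deformations does not identify the deformations modulo $\mathcal{G}_0$: that would require the response function $\tens{T}_0$ to be injective on $\mathcal{D}/\mathcal{G}_0$, and nothing in the stated hypotheses (non-degeneracy, properness, $C^1$ regularity, ISCC) delivers this. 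A non-degenerate, even polyconvex, $\psi_0$ can have $\tens{T}_0(\tens{A})=\tens{T}_0(\tens{B})$ with $\tens{A}\neq\tens{B}\tens{Q}$ for every $\tens{Q}\in\mathcal{G}_0$ (think of a non-monotone uniaxial response, or of the multiple zero-stress minimizers already allowed by Remark~\ref{rem:uni}, which are not claimed to form a single $\mathcal{G}_0$-orbit). Your closing paragraph only treats the benign case in which the two candidate relaxing maps do differ by a symmetry element; the failure mode is that they need not. Without the identification, the final chain $(\det\tens{F}_1)(\det\widehat{\tens{G}}[\tens{S}'])\,\psi_0(\tens{F}_2\widehat{\tens{G}}[\tens{S}']^{-1})=(\det\tens{G})\,\psi_0(\tens{F}_2\tens{F}_1\tens{G}^{-1})$ breaks, since it needs precisely $\widehat{\tens{G}}[\tens{S}']\tens{F}_1\tens{G}^{-1}\in\mathcal{G}_0$. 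To close the gap you would either have to add an invertibility assumption on $\tens{T}_0$ modulo $\mathcal{G}_0$ (which is essentially what makes the construction well posed in \cite{gower2016new}), or do as the paper does and invoke that reference for this implication.
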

\begin{proof}
It is proved in \cite{gower2016new} that the strain energy $\psi_{\widehat{\tens{G}}[\tens{\Sigma}](\vect{X})}$ satisfies the ISRI.

Let $\psi$ be a strain energy which satisfies the ISRI and such that $\psi(\tens{F};\,\tens{0})=\psi_0(\tens{F})$. The existence of the function $\widehat{\tens{G}}[\tens{\Sigma}](\vect{X})$ is guaranteed by the Theorem \ref{thm:existence}.

Omitting the explicit dependence on $\vect{X}$ for the sake of compactness,  we obtain:
\begin{align*}
\psi(\tens{F};\,\tens{\Sigma})&=\psi(\tens{F}\widehat{\tens{G}}[\tens{\Sigma}]^{-1}\widehat{\tens{G}}[\tens{\Sigma}];\,\tens{\Sigma})=\\
&=(\det\widehat{\tens{G}}[\tens{\Sigma}])\psi(\tens{F}\widehat{\tens{G}}[\tens{\Sigma}]^{-1},\,\tens{T}(\widehat{\tens{G}}[\tens{\Sigma}];\,\tens{\Sigma}))=\\
&=(\det\widehat{\tens{G}}[\tens{\Sigma}])\psi(\tens{F}\widehat{\tens{G}}[\tens{\Sigma}]^{-1},\,\tens{0})=\\
&=(\det\widehat{\tens{G}}[\tens{\Sigma}])\psi_0(\tens{F}\widehat{\tens{G}}[\tens{\Sigma}]^{-1})=\psi_{\widehat{\tens{G}}[\tens{\Sigma}]}(\tens{F})
\end{align*}
that concludes the proof.
\end{proof}

Let us now introduce the following Definiton:

\begin{dfn}
\label{dfn:poly}
A strain energy density  $\psi_0(\tens{F})$ is said \emph{polyconvex} if there exists a convex function $h:\R^{19}\rightarrow\R$ such that
\[
\psi_0(\tens{F})=h(\tens{F},\,\cof \tens{F},\,\det\tens{F}),
\]
 for all $\tens{F}\in\L^+(\R^3).$
\end{dfn}

We also introduce the following useful Lemma:
\begin{lem}
\label{lem:poly}
Let $\psi(\tens{F};\,\tens{\Sigma}(\vect{X}))$ be a strain energy density satisfying the hypotheses of Theorem~\ref{thm:existence}, the ISRI and such that $\psi(\tens{F};\,\tens{0})$ is polyconvex. Then $\psi(\tens{F};\,\tens{\Sigma}(\vect{X}))$ is polyconvex for all $\vect{X}\in\Omega_0$.
\end{lem}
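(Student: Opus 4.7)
The plan is to reduce the claim to the well-known fact that the class of polyconvex strain energies is invariant under the transformation $\psi_0(\tens{F}) \mapsto (\det\tens{G})\,\psi_0(\tens{F}\tens{G}^{-1})$ for any fixed $\tens{G}\in\L^+(\R^3)$. The first step is to invoke Theorem~\ref{thm:uniqueness}, which, under the stated hypotheses (non-degeneracy, $C^1$ properness, ISCC, and ISRI), provides the representation
\[
\psi(\tens{F};\,\tens{\Sigma}(\vect{X})) = (\det\widehat{\tens{G}}[\tens{\Sigma}](\vect{X}))\,\psi_0\!\left(\tens{F}\,\widehat{\tens{G}}[\tens{\Sigma}](\vect{X})^{-1}\right).
\]
Then I would fix an arbitrary $\vect{X}\in\Omega_0$ and write $\tens{G} := \widehat{\tens{G}}[\tens{\Sigma}](\vect{X})\in\L^+(\R^3)$, so that it suffices to prove that $\tens{F}\mapsto (\det\tens{G})\,\psi_0(\tens{F}\tens{G}^{-1})$ is polyconvex on $\L^+(\R^3)$.

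For the core argument, recall from Definition~\ref{dfn:poly} that there exists a convex $h:\R^{19}\to\R$ with $\psi_0(\tens{F}) = h(\tens{F},\cof\tens{F},\det\tens{F})$. Using the multiplicative identities $\cof(\tens{F}\tens{G}^{-1}) = \cof(\tens{F})\cof(\tens{G}^{-1})$ and $\det(\tens{F}\tens{G}^{-1}) = \det(\tens{F})\det(\tens{G}^{-1})$, the map
\[
(\tens{F},\cof\tens{F},\det\tens{F}) \;\longmapsto\; \bigl(\tens{F}\tens{G}^{-1},\,\cof(\tens{F})\cof(\tens{G}^{-1}),\,\det(\tens{F})\det(\tens{G}^{-1})\bigr)
\]
is linear (with coefficients depending only on $\tens{G}$, hence only on $\vect{X}$). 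Therefore, defining
\[
\tilde h(\tens{A},\tens{B},\delta) := (\det\tens{G})\,h\bigl(\tens{A}\tens{G}^{-1},\,\tens{B}\cof(\tens{G}^{-1}),\,\delta\det(\tens{G}^{-1})\bigr),
\]
we get $\psi(\tens{F};\,\tens{\Sigma}(\vect{X})) = \tilde h(\tens{F},\cof\tens{F},\det\tens{F})$.

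Finally, convexity of $\tilde h$ follows from convexity of $h$ together with the standard fact that the composition of a convex function with an affine function is convex, and multiplication by the positive constant $\det\tens{G}>0$ preserves convexity. This yields polyconvexity of $\psi(\cdot;\,\tens{\Sigma}(\vect{X}))$ at the arbitrary chosen $\vect{X}$, which completes the argument. There is no serious obstacle here: the only subtlety is verifying that the auxiliary representation provided by Theorem~\ref{thm:uniqueness} is indeed available under the assumed hypotheses, and that the cofactor/determinant identities are exploited in the correct multiplicative form, but both points are straightforward. The result is essentially a pointwise version of the invariance of polyconvexity under constant right-multiplicative distortions already noted in \cite{neff}.
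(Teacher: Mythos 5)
Your proposal is correct and follows essentially the same route as the paper: both first invoke Theorem~\ref{thm:uniqueness} to obtain the representation $\psi(\tens{F};\,\tens{\Sigma}(\vect{X}))=(\det\widehat{\tens{G}}[\tens{\Sigma}](\vect{X}))\,\psi_0(\tens{F}\widehat{\tens{G}}[\tens{\Sigma}](\vect{X})^{-1})$ and then conclude from the fact that right multiplication by a fixed tensor in $\L^+(\R^3)$ preserves polyconvexity. The only difference is that you prove this last invariance explicitly (via multiplicativity of $\cof$ and $\det$ and composition of the convex $h$ with a linear map), whereas the paper simply cites Lemma~6.5 of \cite{neff}; your inline argument is sound and makes the lemma self-contained.
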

\begin{proof}
From the Theorem~\ref{thm:uniqueness}, we get
\[
\psi(\tens{F};\,\tens{\Sigma}(\vect{X}))=(\det{\widehat{\tens G}}[\tens{\Sigma}](\vect{X}))\psi(\tens{F}\widehat{\tens{G}}[\tens{\Sigma}](\vect{X})^{-1};\,0).
\]
Following the Remark \ref{rem:localbound}, we have that $\widehat{\tens{G}}[\tens{\Sigma}]\in \mathcal{B}(\Omega_0,\mathcal{L}^+(\R^3))$ and the Lemma is a direct consequence of the Lemma $6.5$ in \cite{neff}.
\end{proof}

Indeed, under some regularity assumptions, if the strain energy density $\psi$ is polyconvex in the relaxed case, then it is polyconvex for all $\tens{\Sigma}:\Omega_0 \to \S(\R^3)$. It is now possible to prove a theorem of existence of elastic  minimizers for initially stressed bodies.
\begin{thm}[Existence of elastic  minimizers]
\label{thm:existence_solution}
Let $\Omega_0\subset\R^3$ be a connected, bounded and open subset with a regular boundary and let $\psi(\tens{F};\,\tens{\Sigma}(\vect{X}))$ be a strain energy density for an initially stressed material, with $\psi(\cdot;\,\tens{\Sigma}(\vect{X}))\in C^1(\L^+(\R^3))$ and $\psi(\tens{F};\cdot)\in C^0(\S(\R^3))$. Let $\tens{\Sigma}:\Omega_0 \to \S(\R^3)$ be a measurable function.

We assume that:
\begin{itemize}
\item[(i)] (initial stress independence and non--degeneracy) $\psi$ fulfills the hypotheses of Theorem \ref{thm:existence} and the ISRI condition \eqref{eq:ISRIW};

\item[(ii)](polyconvexity of the relaxed energy) in the absence of initial stresses, the strain energy density $\psi(\tens{F};\,\tens{0})$ is polyconvex with respect to $\tens{F}$, namely there exists a convex function $h:\L(\R^3)\times\L(\R^3)\times(0,\,+\infty)\rightarrow\R$ such that
\[
\psi(\tens{F};\,\tens{0})=h(\tens{F},\,\cof\tens{F},\,\det\tens{F})
\]

\item[(iii)] (coercivity of the relaxed energy) there exist $\alpha >0$, $\beta \in \R$, $p \ge 2$, $q\ge p/(p-1)$, $r>1$ such that:
\[
h(\tens{F},\,\tens{C},\,\delta)\geq \alpha(|\tens{F}|^p+|\tens{C}|^q+\delta^r)+\beta,\qquad\forall\tens{F},\,\tens{C}\in\L(\R^3),\,\delta>0.
\]
\end{itemize}

\noindent We assume that there exist two disjointed subset $\Gamma_0,\,\Gamma_1$ such that $\partial\Omega_0=\Gamma_0\cup\Gamma_1$ and such that $|\Gamma_0|>0$. Let $\vect{f}:\Omega_0\rightarrow \R^3$ and $\vect{t}:\Gamma_1\rightarrow\R^3$ measurable such that the application
\[
L[\vect{\varphi}]=\int_{\Omega_0}\vect{f}\cdot\vect{\varphi}\,d\vect{X}+\int_{\Gamma_1}\vect{t}\cdot\vect{\varphi}\,dS
\]
is continuous on $W^{1,p}(\Omega_0,\,\R^3)$. Finally let $\vect{\varphi}_0:\Gamma_0\rightarrow\R^3$ be a measurable function and such that the set
\begin{multline}
U = \Big\{\vect{\varphi}\in W^{1,p}(\Omega_0,\,\R^3)\;|\;\cof\Grad\vect{\varphi}\in L^q,\,\det\Grad\vect{\varphi}\in L^r,\\
\det\Grad\vect{\varphi}>0\text{ a.e. in }\Omega_0,\,\vect{\varphi}=\vect{\varphi}_0\text{ on }\Gamma_0\Big\}.
\end{multline}
is non-empty.

Then, defining the functional $\mathcal{F}:U\rightarrow\R\cup\{+\infty\}$ as
\[
\mathcal{F}[\vect{\varphi}]=\int_{\Omega_0}\psi(\Grad\vect{\varphi};\,\tens{\Sigma}(\vect{X}))d\vect{X}-L[\vect{\varphi}]
\]
and assuming that $\inf \mathcal{F}[\vect{\varphi}]<+\infty$, there exists an elastic minimizer
\[
\min_{\vect{\varphi}\in U}\mathcal{F}[\vect{\varphi}].
\]
\end{thm}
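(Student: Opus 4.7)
The plan is to reduce the statement to Ball's classical polyconvex existence theorem by exploiting the structural results of the preceding two sections. By Theorem~\ref{thm:uniqueness}, hypothesis~(i) together with the ISCC lets us rewrite
\[
\psi(\tens{F};\,\tens{\Sigma}(\vect{X})) = (\det\widehat{\tens{G}}[\tens{\Sigma}](\vect{X}))\,\psi_0\bigl(\tens{F}\,\widehat{\tens{G}}[\tens{\Sigma}](\vect{X})^{-1}\bigr),
\]
where $\widehat{\tens{G}}[\tens{\Sigma}]\in\mathcal{B}(\Omega_0,\,\mathcal{L}^+(\R^3))$ by Remark~\ref{rem:localbound}. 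Lemma~\ref{lem:poly} then gives pointwise polyconvexity of $\psi(\cdot;\,\tens{\Sigma}(\vect{X}))$, so there is a function $\tilde h:\Omega_0\times\L(\R^3)\times\L(\R^3)\times(0,+\infty)\to\R$, convex in its last three arguments, with $\psi(\tens{F};\,\tens{\Sigma}(\vect{X})) = \tilde h(\vect{X},\,\tens{F},\,\cof\tens{F},\,\det\tens{F})$.

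Next I transfer the coercivity of $h$ to $\tilde h$. From the proof of Theorem~\ref{thm:existence}, $\widehat{\tens{G}}[\tens{\Sigma}](\vect{X})$ lies in the sublevel set $\bar U$, which is bounded away from both $0$ and the boundary of $\mathcal{L}^+(\R^3)$ thanks to the non-degeneracy axiom~\ref{ax:nondeg2}. Taking uniform bounds $c_1\le\det\widehat{\tens{G}}[\tens{\Sigma}](\vect{X})\le c_2$ and $|\widehat{\tens{G}}[\tens{\Sigma}](\vect{X})^{-1}|\le c_3$ on $\Omega_0$ (ensured either by essential boundedness of $\tens\Sigma$ or by the argument in Remark~\ref{rem:zeromeasure} on the set of singular values), and using $|\tens F|\le|\tens F\tens G^{-1}||\tens G|$ together with the analogous inequalities for $\cof$ and $\det$, hypothesis~(iii) yields constants $\alpha'>0$, $\beta'\in\R$ with
\[
\psi(\tens{F};\,\tens{\Sigma}(\vect{X})) \ge \alpha'\bigl(|\tens{F}|^p + |\cof\tens{F}|^q + (\det\tens{F})^r\bigr) + \beta'
\]
for a.e.\ $\vect X\in\Omega_0$ and all $\tens F\in\L^+(\R^3)$.

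The step I expect to be the most delicate is establishing the measurability of $\vect X\mapsto\widehat{\tens{G}}[\tens{\Sigma}](\vect X)$ so that $\tilde h$ is a genuine Carath\'eodory integrand: by Remark~\ref{rem:uni} the relaxing tensor is non-unique up to the material symmetry group, so a selection argument is required. Since $\tens\Sigma$ is measurable and the multifunction $\vect X\mapsto\argmin_{\tens F\in\bar U}\psi(\tens F;\,\tens\Sigma(\vect X))$ has closed, non-empty values depending measurably on $\vect X$ (continuity of $\psi$ in the second slot plus Berge's theorem), the Kuratowski--Ryll-Nardzewski theorem provides a measurable selection $\widehat{\tens{G}}[\tens\Sigma]$, whence $\tilde h$ is Carath\'eodory and convex in its last three arguments.

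The conclusion now follows from the direct method. Let $\{\vect\varphi_n\}\subset U$ be a minimizing sequence. Continuity of $L$, the Poincar\'e inequality on $\Gamma_0$ (using $|\Gamma_0|>0$), and the coercivity established above yield uniform bounds on $\Grad\vect\varphi_n$ in $L^p$, on $\cof\Grad\vect\varphi_n$ in $L^q$, and on $\det\Grad\vect\varphi_n$ in $L^r$. The conditions $p\ge 2$, $q\ge p/(p-1)$, $r>1$ are exactly those making the minors weakly continuous in the sense of Ball: up to subsequences $\vect\varphi_n\rightharpoonup\vect\varphi^\ast$ in $W^{1,p}$, $\cof\Grad\vect\varphi_n\rightharpoonup\cof\Grad\vect\varphi^\ast$ in $L^q$, $\det\Grad\vect\varphi_n\rightharpoonup\det\Grad\vect\varphi^\ast$ in $L^r$, with $\det\Grad\vect\varphi^\ast>0$ a.e.\ and $\vect\varphi^\ast=\vect\varphi_0$ on $\Gamma_0$ by trace continuity; thus $\vect\varphi^\ast\in U$. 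Convexity of $\tilde h$ in its last three arguments gives weak lower semicontinuity of the integral (standard result for Carath\'eodory polyconvex integrands), and combined with the continuity of $L$ we conclude $\mathcal{F}[\vect\varphi^\ast]\le\liminf_n\mathcal{F}[\vect\varphi_n]=\inf_U\mathcal{F}$, so $\vect\varphi^\ast$ is the required minimizer.
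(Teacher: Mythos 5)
Your proposal is correct and follows essentially the same route as the paper: rewrite $\psi(\tens{F};\,\tens{\Sigma}(\vect{X}))$ via Theorem~\ref{thm:uniqueness} as a distortion energy $(\det\widehat{\tens{G}}[\tens{\Sigma}])\,\psi_0(\tens{F}\widehat{\tens{G}}[\tens{\Sigma}]^{-1})$, invoke Lemma~\ref{lem:poly} for pointwise polyconvexity, transfer the coercivity of $h$ using the boundedness of $\widehat{\tens{G}}[\tens{\Sigma}]$ from Remark~\ref{rem:localbound}, and conclude by Ball's direct method. You are in fact somewhat more careful than the paper's sketch, notably in supplying a measurable-selection argument for $\vect{X}\mapsto\widehat{\tens{G}}[\tens{\Sigma}](\vect{X})$ (which the paper sidesteps by checking the Carath\'eodory property on $\psi(\tens{F};\,\tens{\Sigma}(\vect{X}))$ directly) and in flagging the need for uniform bounds on $\det\widehat{\tens{G}}[\tens{\Sigma}]$ and $\widehat{\tens{G}}[\tens{\Sigma}]^{-1}$ in the coercivity step.
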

\begin{proof}
Using  the Theorem \ref{thm:uniqueness},  from (i) we have that
\begin{equation}
\psi(\tens{F};\,\tens{\Sigma}(\vect{X}))=(\det{\widehat{\tens G}}[\tens{\Sigma}](\vect{X}))\psi_0(\tens{F}\widehat{\tens{G}}[\tens{\Sigma}](\vect{X})^{-1}) =: \Psi(\vect{X},\,\tens{F}).
\label{p1}
\end{equation}
We prove the  claim as a direct application of the Theorem 7.3 in \cite{ball1976convexity}. Here we only sketch the proof, pointing to \cite{ball1976convexity} for the details. 

Since $\tens{\Sigma}$ is measurable and $\psi(\tens{F},\cdot)$ is continuous for all $\tens{F}$, then $\Psi(\vect{X},\tens{F})$ is a Carath\'eodory function, i.e. it is continuous with respect to  $\tens{F}$ a.e. in $\Omega_0$ and measurable in $\Omega_0$ for all  $\tens{F}\in\L^+(\R^3)$. Hence, the functional $\mathcal{F}$ is well defined.

By simple application of  Lemma \ref{lem:poly} and (ii), $\Psi(\vect{X},\,\tens{F})$ is polyconvex a.e. in $\Omega_0$. The coercivity of $\mathcal{F}$ is  enforced a.e. in $\Omega_0$ by the hypothesis (iii), the boundedness of $\widehat{\tens{G}}[\tens{\Sigma}]$ in  Remark \ref{rem:localbound} and the continuity of $L$ in $W^{1,p}(\Omega_0,\,\R^3)$. The non-degeneracy of $\Psi(\vect{X},\,\tens{F})$ for $\det \tens{F} \rightarrow 0^+$ is given by (i).
Hence, by applying the standard methods of the calculus of variations, we can show the existence of infimizing sequences $\vect{\varphi}_k \in U$ which admit weakly converging subsequences to a limit point $\vect{\varphi} \in U$. Since the functional $\mathcal{F}$ is lower semicontinuous as a consequence of its policonvexity, the weak limit $\vect{\varphi} \in U$ minimizes $\mathcal{F}$.

\end{proof}
Such a Theorem is a standard application of Ball's theorem on the existence of solutions in nonlinear elasticity \cite{ball1976convexity}. The main result obtained in this section is that \emph{the ISRI automatically guarantees that the polyconvexity is preserved for all the initial stress fields if it holds for $\tens{\Sigma}=\tens{0}$}. Conversely, if we do not assume the ISRI, the polyconvexity of the strain energy density should be imposed by a suitable constitutive restriction on the dependence with respect to the initial stress field.

According to the Theorem \ref{thm:uniqueness}, imposing the ISRI condition is equivalent to require that {the initial stress tensor} $\tens{\Sigma}$ is generated by an elastic distortion given by  $\widehat{\tens{G}}[\tens{\Sigma}]$. 

Conversely, if the ISRI does not hold, the dependence of the stored elastic energy on the choice of the reference configuration is not solely related to the the corresponding variation of the initial stress. Thus, the material properties may depend on the specific initial stress field.

For the sake of clarity, let us investigate how the material symmetry group depends on the presence of an initial stress within the body. We denote by $\mathcal{G}_{0}$ the material symmetry group of the relaxed state around a material point $\vect{X}$. In view of Theorem \ref{thm:uniqueness} and following the same computation of \eqref{eq:sym_group_dist}, if the strain energy fulfills the ISRI condition for a generic initial stress field $\tens{S}$ and for all $\tens{Q}\in\mathcal{G}_{0}$, we get
that the material symmetry group $\mathcal{G}_\tens{S}$ of the initially stressed body is given by
\[
\mathcal{G}_\tens{S}=\tens{G}_{\tens{S}}^{-1}\mathcal{G}_{0}\tens{G}_{\tens{S}},
\]
where the tensor $\tens{G}_{\tens{S}}$ is defined in Theorem~\ref{thm:existence}. Hence, the group $\mathcal{G}_\tens{S}$ is conjugated to the group $\mathcal{G}_0$ through the tensor $\tens{G}_{\tens{S}}$, exactly as in the theory of elastic distortions \eqref{eq:sym_group_dist}.

Conversely, we now consider a strain energy of the form
\begin{equation}
\label{eq:ener_no_ISRI}
\psi(\tens{F};\,\tens{S})=f(I_1(\tens{C})-3)+g(J_1-\tr \tens{S})
\end{equation}
where $J_1 = \tr(\tens{S}\tens{C})$, and the function $f$ and $g$ must be such that the energy density \eqref{eq:ener_no_ISRI} satisfies the ISCC \eqref{eqn:ISCC} and $g$ is non-constant.  If the material is initially unstressed (i.e. $\tens{S}=\tens{0}$), the strain energy density \eqref{eq:ener_no_ISRI} defines  a general isotropic nonlinear elastic response and the material symmetry group is given by
\[
\mathcal{G}_{0}=\mathcal{O}^+(\R^3).
\]

However, if we consider an initial stress $\tens{S}=\alpha\vect{M}\otimes\vect{M}$, where $\vect{M}$ is a unit vector, we observe a change in the nature of the material symmetry group. In fact, considering that
\[
\tr(\tens{S}\tens{Q}^T\tens{C}\tens{Q})=\tr(\tens{S}\tens{C})\quad\forall\tens{F}\in\mathcal{D}\qquad\Longleftrightarrow\qquad\tens{Q}\vect{M}=\vect{M}.
\]
The material symmetry group $\mathcal{G}_{\tens{S}}$ is given by
\[
\mathcal{G}_{\tens{S}} = \left\{\tens{Q}\in\mathcal{O}^+(\R^3)\;|\;\tens{Q}\vect{M}=\vect{M}\right\},
\]
so that the material is not anymore isotropic but transversely isotropic.

Thus, if the material does not satisfy the ISRI, the material symmetry group $\mathcal{G}_\tens{S}$  is not conjugated with $\mathcal{G}_0$ and it is not possible to obtain the material symmetry group $\mathcal{G}_0$ by an elastic  distortion of the material. In other words, if the ISRI condition is not fulfilled, the  body may change its material symmetry group depending on the imposed initial stress field, leading to a modification of the material response.

\section{An illustrating example: the relaxed state of a soft disc with anisotropic initial stress}
\label{sec:7}
As an example, we consider a disc of radius $R_0$  composed of an incompressible nonlinear elastic material subjected to planar strains and initial stresses.
 Let $(\vect E_R,\,\vect{E}_\Theta)$ and $(\vect e_r,\,\vect{e}_\theta)$ be the cylindrical vector basis in Lagrangian and Eulerian coordinates respectively. We assume that the initial stress is axis-symmetric, having the following general form
\begin{equation}
\label{eq:tensSigmaDisco}
\tens{\Sigma}=\left (\alpha + \beta \log\left(\frac{R}{R_0}\right)\right)\vect E_R\otimes\vect{E}_R+\left(\gamma + \beta \log\left(\frac{R}{R_0}\right)\right)\vect E_\Theta\otimes\vect{E}_\Theta.
\end{equation}

The body in the reference configuration must obey the linear momentum balance, that in the absence of bulk forces reads
\begin{equation}
\label{eq:balance_half}
\Diver \tens{\Sigma}=\vect{0}.
\end{equation}
Since the residual stress tensor $\tens{\Sigma}$ depends only on the radial coordinate $R$, \eqref{eq:balance_half} reduces to the following scalar equation
\[
\frac{d \Sigma_{RR}}{d R} + \frac{\Sigma_{RR}-\Sigma_{\Theta\Theta}}{R}=0;
\]
that is fulfilled if and only if
\[
\beta = \gamma - \alpha.
\]
If the disc is not subjected to any external traction, then $\Sigma_{RR}(R_0)=0$, so that $\alpha=0$.
We now aim at calculating the elastic minimizer corresponding to this particular choice of the initial stresses.
Let $\psi(\tens{F};\,\tens{\Sigma})$ be the strain energy density of the initially stressed disc.  We assume that in the absence of residual stresses, the material behaves as a general isotropic material, such that
\[
\psi(\tens{F};\,\tens{0}) = f(I_1(\tens{C})-2)
\]
where $f:[0,+\infty[\rightarrow\R$ is a convex function of its scalar argument. In view of Theorem \ref{thm:uniqueness}, $\psi(\tens{F};\,\tens{\Sigma})=\psi(\tens{F}\widehat{\tens{G}}^{-1}[\tens{\Sigma}];\,\tens{0})$.
Using the polar decomposition  $\widehat{\tens{G}}[\tens{\Sigma}]= \tens{R} \tens{U}_{\widehat{\tens{G}}}$ where the tensor $\tens{R}$ is a proper orthogonal tensor and $\tens{U}_{\widehat{\tens{G}}}$ is the corresponding right stretch tensor, we denote the metric tensor of the initial elastic distortion by
\begin{equation}
\label{eq:Bg}
\tilde{\tens{B}}=\widehat{\tens{G}}[\tens{\Sigma}]^{-1}\widehat{\tens{G}}[\tens{\Sigma}]^{-T}= \tens{U}_{\widehat{\tens{G}}}^{-2},
\end{equation}
where $\lambda_{\widehat{\tens{G}}}$ is the principal eigenvalue of $\tens{U}_{\widehat{\tens{G}}}$.
Accordingly, the ISCC condition \eqref{eqn:ISCC} imposes:
\begin{equation}
\label{eq:SigmaF}
\tens{\Sigma}=2 f'(I_1(\tilde{\tens{B}})-2)\tilde{\tens{B}}-p_\tens{\Sigma}\tens{I}
\end{equation}
where $p_\tens{\Sigma}$ acts  as the Lagrange multiplier enforcing the incompressibility of the metric tensor. From the expression of the initial stress \eqref{eq:tensSigmaDisco} and the equations \eqref{eq:Bg}-\eqref{eq:SigmaF}, we get that $\tens{U}_{\widehat{\tens{G}}}$ is diagonal with respect to the cylindrical vector basis, thus
\[
\tens{U}_{\widehat{\tens{G}}}= \diag(\lambda_{\widehat{\tens{G}}}, \lambda_{\widehat{\tens{G}}}^{-1}).
\]
Considering that
\[
\tr(\widehat{\tens G}[\tens{\Sigma}]^{-T}\tens{F}^T \tens{F}\widehat{\tens G}[\tens{\Sigma}]^{-1})=\tr(\tilde{\tens{B}}\tens{C})
\]
by enforcing the ISRI condition we can write the strain energy density as
\[
\psi(\tens{F};\,\tens{\Sigma})=f(\tr(\tilde{\tens{B}}\tens{C})-2).
\]
By applying the trace and the determinant operator on both sides of \eqref{eq:SigmaF}, we obtain respectively
\begin{equation}
\label{eq:dettrSigmaF}
\left\{
\begin{aligned}
&p_\tens{\Sigma} = f'(I_1(\tilde{\tens{B}})-2)I_1(\tilde{\tens{B}})-\frac{I_1(\tens{\Sigma})}{2}\\
&I_3(\tens{\Sigma})+I_1(\tens{\Sigma})p_\tens{\Sigma} + p_\tens{\Sigma}^2 = 4 (f'(I_1(\tilde{\tens{B}})-2))^2
\end{aligned}
\right.
\end{equation}
where $I_1(\tens{\Sigma})= \tr \tens{\Sigma}$ and $I_3(\tens{\Sigma})= \det \tens{\Sigma}$.
After substituting in \eqref{eq:dettrSigmaF} the first equation into the second one, we get
\begin{equation}
\label{eq:Ib1}
\frac{I_1(\tens{\Sigma})^2}{4} - I_3(\tens{\Sigma})  = (f'(I_1(\tilde{\tens{B}})-2))^2(I_1(\tilde{\tens{B}})^2-4)
\end{equation}

The term $\frac{I_1(\tens{\Sigma})^2}{4} - I_3(\tens{\Sigma}) = \frac{(\Sigma_{RR}-\Sigma_{\Theta \Theta})^2}{4}$ is always positive. Since $f(x)$ is strictly convex with a minimum in $x=0$,  the rhs of \eqref{eq:Ib1} is a positive--definite, strictly monotone function of $I_1(\tilde{\tens{B}})$ Thus,  \eqref{eq:Ib1} is  invertible and the principal eigenvalue $\lambda_{\widehat{\tens{G}}} = \lambda_{\widehat{\tens{G}}}(\Sigma_{RR},\Sigma_{\Theta \Theta})$ is given by:
\begin{equation}
\label{eq:gsigma}
\left(\left(\lambda_{\widehat{\tens{G}}}^2+\lambda_{\widehat{\tens{G}}}^{-2}\right)^2-4\right) (f'(\lambda_{\widehat{\tens{G}}}^2+\lambda_{\widehat{\tens{G}}}^{-2}-2))^2 = \frac{\left(\Sigma_{RR}-\Sigma_{\Theta \Theta}\right)^2}{4}.
\end{equation}
We multiply each side of \eqref{eq:SigmaF} by $\tens{C}$ on the right, by applying the trace operator we get
\[
\tr(\tilde{\tens{B}}\tens{C})=\frac{J_1+ p_\tens{\Sigma}I_1}{2 f'(\lambda_{\widehat{\tens{G}}}^2+\lambda_{\widehat{\tens{G}}}^{-2}-2)}.
\]
Accordingly,  the strain energy function $\psi(\tens{F};\,\tens{\Sigma})$ for an initially stressed isotropic material reads
\begin{equation}
\label{eq:strain_ener_I1}
\psi(\tens{F};\,\tens{\Sigma})=f\left(\frac{J_1+ p_\tens{\Sigma}I_1}{2 f'(\lambda_{\widehat{\tens{G}}}^2+\lambda_{\widehat{\tens{G}}}^{-2}-2)}-2\right).
\end{equation}

Note that the relaxing map corresponding to \eqref{eq:strain_ener_I1} is the map $\widehat{\tens{G}}[\tens{\Sigma}]$ as defined by \eqref{eq:Bg} and \eqref{eq:SigmaF}, since we have written $\psi(\tens{F};\,\tens{\Sigma})$ as $\psi(\tens{F}\widehat{\tens{G}}^{-1}[\tens{\Sigma}];\,\tens{0})$.

A mapping whose deformation gradient corresponds to $\tens{U}_{\widehat{\tens{G}}}$ is given by
\begin{equation}
\label{eq:mapg}
r = \lambda_{\widehat{\tens{G}}} R,\quad\theta = \frac{\Theta}{\lambda_{\widehat{\tens{G}}}^2},\quad z = Z
\end{equation}

This relaxing map corresponds to a controllable deformation for isotropic materials, meaning that it can be supported by surface tractions alone at equilibrium. It describes the opening of the initial disc into a circular sector,  corresponding to non-homogeneous displacements and homogeneous strains \cite{singh1965note,silling1991creasing}.
In fact, we remark that  \eqref{eq:mapg} does not globally map a physically compatible  configuration even if the Riemann curvature of the underlying metric tensor is zero. This can be easily checked since the curl operator of the deformation tensor corresponding to  \eqref{eq:mapg} is not zero if $\lambda_{\widehat{\tens{G}}}\neq1$. From \eqref{eq:gsigma}, this condition implies $\Sigma_{RR}\neq\Sigma_{\Theta \Theta}$, or equivalently $\gamma\neq0$. Therefore,  the relaxing map given by \eqref{eq:mapg}  is a non-uniform controllable stress state with uniform deviatoric invariants. The latter is the necessary condition for stress controllability given in \cite{carroll1973controllable}. 

\section{Concluding remarks}

This work proved novel insights on the link between the existence of elastic minimizers and the constitutive assumptions for initially stres\-sed materials subjected to finite deformations.

Assuming a  strain energy density in the form  $\psi(\tens{F};\,\tens{\Sigma})$ and a non-degeneracy axiom,  we clarified the mathematical implications of assuming the ISRI condition as a constitutive restriction. Theorem \ref{thm:existence} proves the existence of a relaxed state given by the tensor function $\widehat{\tens{G}}[\tens{\Sigma}]$ as an implicit function of the initial stress distribution. The tensor $\widehat{\tens{G}}[\tens{\Sigma}]$ is generally not unique, and can be transformed accordingly to the symmetry group of $\psi$. Moreover, Theorem \ref{thm:uniqueness} proves that each strain energy density function $\psi(\tens{F};\,\tens{\Sigma})$ that satisfies the ISRI condition can be written as $\psi(\tens{F};\,\tens{\Sigma})=(\det\widehat{\tens{G}}[\tens{\Sigma}])\,\psi(\tens{F}\widehat{\tens{G}}[\tens{\Sigma}]^{-1};\,\tens{0})$. Thus, we prove that the material symmetry group of the initially stressed material satisfying the ISRI condition locally changes as we vary $\tens{\Sigma}$ according to the theory of elastic  distortions.

Furthermore, we have used the previous results of Ball to prove the existence Theorem~\ref{thm:existence_solution}  of the elastic minimizers for a strain energy density in the form  $\psi(\tens{F};\,\tens{\Sigma})$, that satisfies the ISRI condition under suitable constitutive restrictions. Such a result is based on the proof that the polyconvexity of the strain energy density of an initially stressed material is automatically inherited for all $\tens{\Sigma}$ if it holds in the case $\tens{\Sigma} = \tens{0}$, given some necessary conditions on the non-degeneracy and the regularity of $\psi(\cdot,\,\tens{\Sigma})$.

Whilst the theory of elastic distortion requires an a priori choice of the virtual incompatible state,  the constitutive restrictions on  $\psi(\tens{F};\,\tens{\Sigma})$  ensure the existence of the elastic minimizers corresponding to the physically observable distribution of the initial stresses. In an illustrative example, we have shown how to calculate the relaxed state of an incompressible isotropic disc as a function of the  axis-symmetric distribution of initial stresses.

We finally remark that the ISRI condition should be assumed for the materials that do not undergo a  change in the underlying  material structure, so that the initial stresses arise only in response to an elastic distortion. This happens, for example, for the residual stresses generated  by a differential growth. By using the classification proposed by  Epstein \cite{epstein2015mathematical}, the ISRI condition is indeed well suited for modeling the growth or the remodelling of a soft material, namely a change of shape that does not affect the material properties and the microstructure of the material. On the contrary, when there is a modification of the microstructure that involves a change in the material properties, the ISRI condition would be physically flawed and other constitutive choices should be done.

Our results prove useful guidelines for the constitutive restrictions on the strain energy densities of initially stresses materials, having  important applications for the study of the morphological stability and wave propagation analysis in soft tissues \cite{ciarletta2016residual}, and the non-destructive evaluation of residual stresses generated by a differential growth in biological materials \cite{li2017guided, du2018modified}.

\subsection*{Funding}
This work has been partially supported by Progetto Giovani GNFM 2017 funded by the National Group of Mathematical Physics (GNFM -- INdAM), and by the AIRC MFAG grant 17412.

\subsection*{Acknowledgements}
We are thankful to Davide Ambrosi, Giulia Bevilacqua and Alessandro Musesti for useful discussions about the contents of this paper. The authors are members of the National Group of Mathematical Physics of the Istituto Nazionale di Alta Matematics, GNFM--INdAM.


\bibliographystyle{abbrv}
\bibliography{refs}
\end{document}